\newcommand{\A}{\ensuremath{\mathsf{RSP}}\xspace}
\newcommand{\variable}[1]{#1}
\newcommand{\parent}{\variable{par}}
\newcommand{\enfants}{\variable{children}}
\newcommand{\status}{\variable{st}}
\newcommand{\dist}{\variable{d}}
\newcommand{\abnormalRoot}{abRoot}
\newcommand{\compute}{computePath}
\newcommand{\weight}{w}
\newcommand{\seg}{\mathsf{seg}}
\newcommand{\maxi}{\texttt{W}_{\max}}
\newcommand{\X}{X}
\newcommand{\nmax}{{n_{\texttt{maxCC}}}}
\newtheorem{definition}{Definition}
\newtheorem{lemma}{Lemma}
\newtheorem{theorem}{Theorem}
\newtheorem{corollary}{Corollary}
\DeclareMathOperator*{\argmin}{\arg\!\min}
\renewcommand{\And}{\wedge}
\newcommand{\PReset}{\ensuremath{P\_reset}\xspace}
\newcommand{\PCorrection}{\ensuremath{P\_correction}\xspace}
\newcommand{\Ref}[2]{\hyperref[#2]{\textit{#1 \ref{#2}}}}
\newcommand{%
\begin{figure}[h]
	\begin{center}
		\input{.pdf_t}
	\caption{}
	\label{}
	\end{center}
\end{figure}%
}[3]{%
\begin{figure}[h]
	\begin{center}
		\input{#1.pdf_t}
	\caption{#2}
	\label{#3}
	\end{center}
\end{figure}%
}
\newcommand{%
\begin{figure}[h]
	\begin{center}
		\resizebox{}{!}{\input{.pdf_t}}
	\caption{}
	\label{}
	\end{center}
\end{figure}%
}[4]{%
\begin{figure}[h]
	\begin{center}
		\resizebox{#1}{!}{\input{#2.pdf_t}}
	\caption{#3}
	\label{#4}
	\end{center}
\end{figure}%
}
\newtheorem{observation}{Observation}
\author{St\'ephane Devismes\affiliationmark{1}
  \and David Ilcinkas\affiliationmark{2}
  \and Colette Johnen\affiliationmark{2}}
\title[Self-Stabilizing Disconnected Components Detection and Rooted
  Shortest-Path Tree]{Self-Stabilizing Disconnected Components
  Detection and Rooted Shortest-Path Tree Maintenance in Polynomial
  Steps\thanks{This study has been partially supported by the
    \textsc{anr} projects \textsc{Descartes} (ANR-16-CE40-0023),
    \textsc{Estate} (ANR-16-CE25-0009), and \textsc{Macaron}
    (ANR-13-JS02-002).  This study has been carried out in the frame
    of ``the Investments for the future'' Programme IdEx Bordeaux --
    \textsc{cpu} (ANR-10-IDEX-03-02).  A preliminary version of this
    paper appeared in the Proceedings of the 20th International
    Conference on Principles of Distributed Systems (OPODIS
    2016)~\cite{DIJ16}.}}
\affiliation{
  Universit\'{e} Grenoble Alpes, Grenoble, France\\
Univ. Bordeaux \& CNRS, LaBRI, UMR 5800, F-33400 Talence, France}
\keywords{distributed algorithm, self-stabilization, routing
  algorithm, shortest path, disconnected network, shortest-path tree}
\begin{document}
\publicationdetails{19}{2017}{3}{14}{3181}

\maketitle

\begin{abstract}
 We deal with the problem of maintaining a shortest-path tree rooted at
some process $r$ in a network that may be disconnected after
topological changes. The goal is then to maintain a shortest-path tree
rooted at $r$ in its connected component, $V_r$, and make all
processes of other components detecting that $r$ is not part of their
connected component.  
We propose, in the composite atomicity model, a silent
self-stabilizing algorithm for this problem working in semi-anonymous
networks, where edges have strictly positive weights.
This algorithm does not require any {\em a priori} knowledge about
global parameters of the network.
We prove its correctness assuming the distributed unfair daemon, the
most general daemon.
Its stabilization time in rounds is at most~$3\nmax+D$, where $\nmax$
is the maximum number of non-root processes in a connected component
and $D$ is the hop-diameter of $V_r$.
Furthermore, if we additionally assume that edge weights are positive
integers, then it stabilizes in a polynomial number of steps: namely,
we exhibit a bound in $O(\maxi \nmax^3 n)$, where $\maxi$ is the
maximum weight of an edge and $n$ is the number of processes.

\end{abstract}

\section{Introduction}

Given a connected undirected edge-weighted graph $G$, a {\em
  shortest-path (spanning) tree rooted at node $r$} is a spanning tree
$T$ of $G$, such that for every node $u$, the unique path from $u$ to
$r$ in $T$ is a shortest path from $u$ to $r$ in $G$.  This data
structure finds applications in the networking area ({\em n.b.}, in
this context, nodes actually represent processes), since many
distance-vector routing protocols, like {\em RIP} ({\em Routing
  Information Protocol}) and {\em BGP} ({\em Border Gateway
  Protocol}), are based on the construction of shortest-path
trees. Indeed, such algorithms implicitly builds a shortest-path tree
rooted at each destination.

From time to time, the network may be split into several connected
components due to the network dynamics. In this case, routing to
process $r$ correctly operates only for the processes of its connected
component, $V_r$. Consequently, in other connected components,
information to reach $r$ should be removed to gain space in routing
tables, and to discard messages destined to $r$ (which are unable to
reach $r$ anyway) and thus save bandwidth.
The goal is then to make the network converging to a configuration
where every process of~$V_r$ knows a shortest path to~$r$ and every
other process detects that~$r$ is not in its own connected
component. We call this problem the {\em Disconnected Components
  Detection and rooted Shortest-Path tree Maintenance}
(DCDSPM) problem. Notice that a solution to this problem allows to prevent the
well-known {\em count-to-infinity} problem \cite{LgW04}, where the
distances to some unreachable process keep growing in routing tables
because no process is able to detect the issue.

When topological changes are infrequent, they can be considered as
transient faults~\cite{T01} and self-stabilization~\cite{D74j} --- a
versatile technique to withstand \textit{any} finite number of
transient faults in a distributed system --- becomes an attractive
approach. A self-stabilizing algorithm is able to recover
without external ({\em e.g.}, human) intervention a correct behavior
in finite time, regardless of the \emph{arbitrary} initial
configuration of the system, and therefore, also after the occurrence
of transient faults, provided that these faults do not alter the code
of the processes.

A particular class of self-stabilizing algorithms is that of silent
algorithms. A self-stabilizing algorithm is {\em
  silent}~\cite{DolevGS96} if it converges to a global state where the
values of communication registers used by the algorithm remain fixed.
Silent (self-stabilizing) algorithms are usually proposed to build
distributed data structures, and so are well-suited for the problem
considered here.
As quoted in~\cite{DolevGS96}, the silent property usually implies
more simplicity in the algorithm design, moreover a silent algorithm
may utilize less communication operations and communication bandwidth.

For the sake of simplicity, we consider here a single destination process
$r$, called the {\em root}. However, the solution we will propose can
be generalized to work with any number of destinations, provided that
destinations can be distinguished. In this context, we do not require
the network to be fully identified. Rather, $r$ should be
distinguished among other processes, and all non-root processes are
supposed to be identical: we consider semi-anonymous networks.

In this paper, we propose a silent self-stabilizing algorithm, called
Algorithm \A, for the DCDSPM problem with a single destination process
in semi-anonymous networks. Algorithm \A does not require any {\em a
  priori} knowledge of processes about global parameters of the
network, such as its size or its diameter.  Algorithm \A is written in
the locally shared memory model with composite atomicity introduced by
Dijkstra~\cite{D74j}, which is the most commonly used model in
self-stabilization.  In this model, executions proceed in (atomic)
steps, and a self-stabilizing algorithm is silent if and only if all
its executions are finite.  Moreover, the asynchrony of the system is
captured by the notion of {\em daemon}. The weakest ({\em i.e.}, the
most general) daemon is the {\em distributed unfair daemon}. Hence,
solutions stabilizing under such an assumption are highly desirable,
because they work under any other daemon
assumption. Interestingly, self-stabilizing algorithms designed
  under this assumption are easier to compose (composition techniques
  are widely used to design and prove complex self-stabilizing
  algorithms).  Moreover, time complexity (the {\em stabilization
  time}, mainly) can be bounded in terms of steps only if the
algorithm works under an unfair daemon. Otherwise ({\em e.g.}, under a
weakly fair daemon), time complexity can only be evaluated in terms of
rounds, which capture the execution time according to the slowest
process.  There are many self-stabilizing algorithms proven under the
distributed unfair daemon, {\em
  e.g.},~\cite{ACDDP14,CDDLR15,DLP11,DLV11,GHIJ14}. However, analyses
of the stabilization time in steps is rather unusual and this may be
an important issue. Indeed, this complexity captures the amount
  of computations an algorithm needs to recover a correct behavior.
  Now, recently, several self-stabilizing algorithms, which work
under a distributed unfair daemon, have been shown to have an
exponential stabilization time in steps in the worst
case. In~\cite{ACDDP14}, silent leader election algorithms
from~\cite{DLP11,DLV11} are shown to be exponential in steps in the
worst case. In~\cite{DJ16}, the Breadth-First Search (BFS) algorithm
of Huang and Chen~\cite{HC92} is also shown to be exponential in
steps. Finally, in~\cite{GHIJ16} authors show that the first silent
self-stabilizing algorithm for the DCDSPM problem (still assuming a
single destination) they proposed in~\cite{GHIJ14} is also exponential
in steps.

\subsection{Contribution} 
Algorithm \A proposed here is proven assuming the distributed unfair daemon. 
We also study its stabilization time in rounds. We establish a
bound of at most~$3\nmax+D$ rounds, where $\nmax$ is the maximum number of
non-root processes in a connected component and $D$ is the hop-diameter of
$V_r$ (defined as the maximum over all pairs $\{u, v\}$ of nodes in
$V_r$ of the minimum number of edges in a shortest path from $u$ to
$v$).

Furthermore, \A is the first silent self-stabilizing algorithm for the
DCDSPM problem which, assuming that the edge weights are positive
integers, achieves a polynomial stabilization time in steps.  Namely,
in this case, the stabilization time of \A is at most
$(\maxi\nmax^3+(3-\maxi)\nmax+3)(n-1)$, where $\maxi$ is the maximum
weight of an edge and $n$ is the number of processes.  ({\em N.b.},
this stabilization time is less than or equal to $\maxi n^4$, for all
$n \geq 3$.)

Finally, notice that when all weights are equal to one, the DCDSPM
problem reduces to a BFS tree maintenance and the step complexity
becomes at most $(\nmax^3 +2\nmax+3)(n-1)$, which is less than or
equal to $n^4$ for all $n \geq 2$.

\subsection{Related Work}

To the best of our knowledge, only one self-stabilizing algorithm for
the DCDSPM problem has been previously proposed in the
literature~\cite{GHIJ14}. This algorithm is silent and works under the
distributed unfair daemon, but, as previously mentioned, it is
exponential in steps. However,
it has a slightly better
stabilization time in rounds, precisely at most $2(\nmax+1)+D$
rounds\footnote{In fact, \cite{GHIJ14} announced $2n+D$ rounds, but it
  is easy to see that this complexity can be reduced to
  $2(\nmax+1)+D$.}.

There are several shortest-path spanning tree algorithms in the
literature that do not consider the problem of disconnected components
detection.  The oldest distributed algorithms are inspired by the
Bellman-Ford algorithm~\cite{bellman,ford}. Self-stabilizing
shortest-path spanning tree algorithms have then been proposed
in~\cite{CS94,HL02}, but these two algorithms are proven assuming a
central daemon, which only allows sequential executions.  However, in
\cite{huang2005self}, Tetz Huang proves that these algorithms actually
work assuming the distributed unfair daemon. Nevertheless, no upper
bounds on the stabilization time (in rounds or steps) are given. More
recently, Cobb and Huang \cite{CH09} proposed an algorithm constructing
shortest-path trees based on any maximizable routing metrics. This
algorithm does not require a priori knowledge about the network but it
is proven only for the central weakly-fair daemon. It runs in a
linear number of rounds and no analysis is given on the number of steps.

Self-stabilizing shortest-path spanning tree algorithms are also given
in \cite{AGH90,CG02,JT03}.  These algorithms additionally ensure the
loop-free property in the sense that they guarantee that a spanning
tree structure is always preserved while edge costs change
dynamically. However, none of these papers consider the unfair daemon, and
consequently their step complexity cannot be analyzed.

Whenever all edges have weight one, shortest-path trees correspond to
BFS trees. In~\cite{DDL12}, the authors introduce the {\em
  disjunction} problem as follows. Each process has a constant input
bit, 0 or 1. Then, the problem consists for each process in computing
an output value equal to the disjunction of all input bits in the
network. Moreover, each process with input bit 1 (if any) should be
the root of a tree, and each other process should join the tree of the
closest input bit 1 process, if any. If there is no process with input
bit~1, the execution should terminate and all processes should
output~0.  The proposed algorithm is silent and
self-stabilizing. Hence, if we set the input of a process to 1 if and
only if it is the root, then their algorithm solves the DCDSPM problem
when all edge-weights are equal to one, since any process which is not
in $V_r$ will compute an output 0, instead of 1 for the processes in
$V_r$. The authors show that their algorithm stabilizes in $O(n)$
rounds, but no step complexity analysis is given. Now, as their approach
is similar to \cite{DLV11}, it is not difficult to see that their
algorithm is also exponential in steps.

Several other self-stabilizing BFS tree algorithms have been proposed, but
without considering the problem of disconnected components detection.
Chen {\em et al.}  present the first self-stabilizing BFS tree
construction in \cite{CYH91} under the central daemon.  Huang and
Chen present the first self-stabilizing BFS tree construction in
\cite{HC92} under the distributed unfair daemon, but recall that this
algorithm has been proven to be exponential in steps
in~\cite{DJ16}. Finally, notice that these two latter
algorithms~\cite{CYH91,HC92} require that the processes know the exact
number of processes in the network.

According to our knowledge, only the following works
\cite{CDV09,CRV11} take interest in the computation of the number of
steps required by their BFS algorithms.  The algorithm in \cite{CDV09}
is not silent and has a stabilization time in $O(\Delta n^3)$ steps,
where $\Delta$ is the maximum degree in the network.  The silent
algorithm given in \cite{CRV11} has a stabilization time~$O(D^2)$
rounds and~$O(n^6)$ steps. 

Silent self-stabilizing algorithms that construct spanning trees of
arbitrary topologies are
given in~\cite{Cournier09,KK05}.  The solution proposed
in~\cite{Cournier09} stabilizes in at most $4n$ rounds and
$5n^2$ steps, while the algorithm given in~\cite{KK05}
stabilizes in $nD$ steps (its round complexity is not analyzed).

Several other papers propose self-stabilizing algorithms stabilizing
in both a polynomial number of rounds and a polynomial number of
steps, {\em e.g.},~\cite{ACDDP14} (for the leader
election),~\cite{CDPV06,CDV05} (for the DFS token circulation). The
silent leader election algorithm proposed in~\cite{ACDDP14} stabilizes
in at most $3n + D$ rounds and $O(n^3)$ steps. DFS token
circulations given in~\cite{CDPV06,CDV05} execute each wave in $O(n)$
rounds and $O(n^2)$ steps using $O(n\log n)$ space per process
for the former, and $O(n^3)$ rounds and $O(n^3)$ steps using $O(\log
n)$ space per process for the latter.

\subsection{Roadmap} In the next section, we present the computational
model and basic definitions. In Section~\ref{sec:algo}, we describe
Algorithm \A. Its proof of correctness and a complexity analysis in
steps are given in Section~\ref{sec:correctness}, whereas an analysis
of the stabilization time in rounds is proposed in
Section~\ref{sect:rounds}. Finally, we make concluding remarks in
Section~\ref{ccl}.

\section{Preliminaries}

We consider {\em distributed systems} made of $n \geq 1$ interconnected
processes. Each process can directly communicate with a subset of
other processes, called its {\em neighbors}. Communication is assumed
to be bidirectional.  Hence, the topology of the system can be
represented as a simple undirected graph $G = (V,E)$, where $V$ is the
set of processes and $E$ the set of edges, representing communication
links.  Every process $v$ can distinguish its neighbors using a {\em
  local labeling} of a given datatype $Lbl$.  All labels of $v$'s
neighbors are stored into the set $\Gamma(v)$.  Moreover, we assume
that each process $v$ can identify its local label in the set
$\Gamma(u)$ of each neighbor $u$. Such labeling is called {\em
  indirect naming} in the literature~\cite{Sloman:1987}. By an abuse
of notation, we use $v$ to designate both the process $v$ itself, and
its local labels.

Each edge~$\{u,v\}$ has a strictly positive 
{\em weight}, 
denoted~by $\omega(u,v)$. This notion naturally extends to paths: the
weight of a path in $G$ is the sum of its edge weights.  The weighted
distance between the processes~$u$ and~$v$, denoted by~$d(u,v)$, is the
minimum weight of a path from~$u$ to~$v$. Of course, $d(u,v) = \infty$
if and only if $u$ and $v$ belong to two distinct connected components
of $G$.

We use the {\em composite atomicity model of
  computation}~\cite{D74j,D00b} in which the processes communicate
using a finite number of locally shared registers, called {\em
  variables}.  Each process can read its own variables and those of
its neighbors, but can write only to its own variables.  The
\textit{state} of a process is defined by the values of its local
variables.  A {\em configuration} of the system is a vector consisting
of the states of every process. 

A \emph{distributed algorithm\/} consists of one local program per
process.  We consider semi-uniform algorithms, meaning that all
processes except one, the {\em root} $r$, execute the same program. In
the following, for every process $u$, we denote by $V_u$ the set of
processes (including $u$) in the same connected component of $G$ as
$u$. In the following $V_u$ is simply referred to as the connected
component of $u$. We denote by $\nmax$ the maximum number of non-root
processes in a connected component of $G$. By definition, $\nmax \leq
n-1$.

The \textit{program} of each process consists of a finite set of
\textit{rules} of the form\ $label\ :\ guard \to\ action$.  {\em
  Labels} are only used to identify rules in the reasoning.  A
\textit{guard} is a Boolean predicate involving the state of the
process and that of its neighbors.  The {\em action\/} part of a rule
updates the state of the process.  A rule can be executed only if its
guard evaluates to {\em true}; in this case, the rule is said to be
{\em enabled\/}.  A process is said to be enabled if at least one of
its rules is enabled.  We denote by $\mbox{\it Enabled}(\gamma)$ the
subset of processes that are enabled in configuration~$\gamma$.

 When the configuration is $\gamma$ and
$\mbox{\it Enabled}(\gamma) \neq \emptyset$,
a non-empty set $\mathcal X \subseteq \mbox{\it Enabled}(\gamma)$ is selected; then
every process of $\mathcal X$ {\em atomically} executes one of its
enabled rules, leading to a new configuration $\gamma^\prime$, and so
on.  The transition from $\gamma$ to $\gamma^\prime$ is called a {\em
  step}. The possible steps induce a binary
relation over $\mathcal C$, denoted by $\mapsto$.  An
{\em execution\/} is a maximal sequence of 
configurations $e=\gamma_0\gamma_1\ldots \gamma_i\ldots$ such that
$\gamma_{i-1}\mapsto\gamma_i$ for all $i>0$. The term ``maximal''
means that the execution is either infinite, or ends at a {\em
  terminal\/} configuration in which no rule is
enabled at any process.  

Each step from a configuration to another is driven by a daemon.  We
define a daemon as a predicate over executions.  We said that an
execution $e$ is {\em an execution under the daemon $S$}, if $S(e)$
holds.  In this paper we assume that the daemon is {\em distributed}
and {\em unfair}. ``Distributed'' means that while the configuration
is not terminal, the daemon should select at least one enabled
process, maybe more. ``Unfair'' means that there is no fairness
constraint, {\em i.e.}, the daemon might never select an enabled
process unless it is the only enabled process. In other words, the
distributed unfair daemon corresponds to the predicate $true$, {\em
  i.e.}, this is the most general daemon.

In the composite atomicity model, an algorithm is {\em silent\/} if
 all its possible executions are finite. Hence, we can define silent
self-stabilization as follows.

\begin{definition}
[Silent Self-Stabilization]
\label{def:selfstabilization}
Let~$\mathcal{L}$ be a non-empty subset of configurations, called set
of legitimate configurations.  A distributed system is silent and
self-stabili\-zing under the daemon~$S$ for~$\mathcal{L}$ if and only
if the following two conditions hold:
\begin{itemize}
\item all executions under~$S$ are finite, and
\item all terminal configurations belong to~$\mathcal{L}$.
\end{itemize}
\end{definition}

We use the notion of \textit{round}~\cite{DIM93} to measure the time
complexity.  The definition of round uses the concept of {\em
  neutralization}: a process~$v$ is \textit{neutralized} during a
step~$\gamma_i \mapsto \gamma_{i+1}$, if~$v$ is enabled in~$\gamma_i$
but not in configuration~$\gamma_{i+1}$.  Then, the rounds are
inductively defined as follows. The first round of an execution~$e =
\gamma_0, \gamma_1, \cdots$ is the minimal prefix~$e' = \gamma_0,
\cdots, \gamma_j$, such that every process that is enabled
in~$\gamma_0$ either executes a rule or is neutralized during a step
of~$e'$. Let~$e''$ be the suffix $\gamma_j, \gamma_{j+1}, \cdots$
of~$e$.  The second round of~$e$ is the first round of~$e''$, and so
on.

The {\em stabilization time} of a silent self-stabilizing algorithm is
the maximum time (in steps or rounds) over every execution possible
under the considered daemon $S$ (starting from any initial
configuration) to reach a terminal (legitimate) configuration.

\section{Algorithm \A}
\label{sec:algo}

This section is devoted to the presentation of our algorithm,
Algorithm \A (which stands for {\em Rooted Shortest-Path}). The 
code of Algorithm \A is given in Algorithm~\ref{alg:A}.

\begin{algorithm}[htpb]
\footnotesize
\begin{center} 
Macro of \A for any process $u$
\end{center}
\begin{tabular}{lll} 
$\enfants(u)$ & $=$ & $\{v
  \in \Gamma(u) \mid \status_u \neq I \wedge \status_v \neq I
  \wedge \parent_v = u \wedge \dist_v \geq \dist_u + \omega(v,u) \wedge
  (\status_v = \status_{u} \vee \status_{u} = EB)\}$\\
\end{tabular}\\

\smallskip

\dotfill

\begin{center} 
Code of \A for the root process $r$
\end{center}
\noindent \textbf{Constants:}\\
\begin{tabular}{lll}
 $\status_r$     & $=$ & $C$\\
 $\parent_r$ & $=$ & $\perp$\\
 $\dist_r$   & $=$ & $0$\\
\end{tabular}\\

\smallskip

\dotfill

\begin{center} 
Code of \A for any process~$u \neq r$
\end{center}

\noindent \textbf{Variables:}\\
\begin{tabular}{lll}
 $\status_u$     & $\in$ & $\{I,C,EB,EF\}$\\
 $\parent_u$ & $\in$ & $Lbl$\\
 $\dist_u$   & $\in$ & $\mathds R^+$
\end{tabular}\\
\noindent \textbf{Predicates:} \\
\begin{tabular}{lll}
$\abnormalRoot(u)$ & $\equiv$ & $\status_u \neq I \wedge \big[ \parent_u
  \notin \Gamma(u) \vee \status_{\parent_u} = I \vee \dist_u <
  \dist_{\parent_u} + \omega(u,\parent_u) \vee$\\
&& $(\status_u \neq
  \status_{\parent_u} \wedge \status_{\parent_u} \neq EB) \big]$\\
$\PReset(u)$ & $\equiv$ & $\status_u = EF \wedge \abnormalRoot(u)$ \\
$\PCorrection(u)$ & $\equiv$ & 
$(\exists v \in \Gamma(u) \mid \status_v = C 
\wedge \dist_v+ \omega(u,v)  < \dist_u)$
\end{tabular}\\
\noindent \textbf{Macro:} \\
\begin{tabular}{lll}
$\compute(u)$ & : & $\parent_u := \argmin_{(v \in \Gamma(u)
 	\And \status_v = C)}(\dist_v + \omega(u,v))$; \\
&&  $\dist_u  := \dist_{\parent_u} + \omega(u, \parent_u)$;\\	 		
&& $\status_u := C$	
\end{tabular}\\
{\textbf{Rules}} \\
\begin{tabular}{lllll}
$\mathbf{R_C}(u)$   & : & $\status_u = C \wedge \PCorrection(u)$                              & $\to$ & $\compute(u)$\\
$\mathbf{R_{EB}}(u)$ & : & $\status_u = C \wedge \neg\PCorrection(u) \wedge$                   & $\to$ & $\status_u := EB$ \\
                    &   & \qquad ($\abnormalRoot(u)$ $\vee$ $\status_{\parent_u}=EB$)           &       &\\
$\mathbf{R_{EF}}(u)$ & : & $\status_u = EB \wedge (\forall v \in \enfants(u) \mid \status_v = EF)$ & $\to$ & $\status_u := EF$\\
$\mathbf{R_I}(u)$   & : & $\PReset(u) \And (\forall v \in \Gamma(u) \mid \status_v \neq C)$   & $\to$ & $\status_u := I$\\
$\mathbf{R_R}(u)$   & : & $(\PReset(u) \vee \status_u = I) \And (\exists v \in \Gamma(u) \mid \status_v = C)$ & $\to$ & $\compute(u)$
\end{tabular}
\caption{Code of \A 
\label{alg:A}}
\end{algorithm}

\subsection{Variables}
In \A, each process $u$ maintains three variables: $\status_u$,
$\parent_u$, and $\dist_u$.
Those three variables are constant for the root process\footnote{We
  should emphasize that the use of constants at the root is not a
  limitation, rather it allows to simplify the design and proof of the
  algorithm. Indeed, these constants can be removed by adding a rule
  to correct all root's variables, if necessary, within a single
  step.}, $r$: $\status_r = C$, $\parent_r = \perp$\footnote{$\perp$
  is a particular value which is different from any value in $Lbl$.},
and $\dist_r = 0$.
For each non-root process $u$, we have:
\begin{itemize}
\item $\status_u \in \{I,C,EB,EF\}$, this variable gives the {\em
    status} of the process. $I$, $C$, $EB$, and $EF$ respectively
  stand for {\em Isolated}, {\em Correct}, {\em Error Broadcast}, and
  {\em Error Feedback}. The two first states, $I$ and $C$, are
  involved in the normal behavior of the algorithm, while the two last
  ones, $EB$ and $EF$, are used during the correction
  mechanism. Precisely, $\status_u = C$ (resp. $\status_u = I)$ means
  that $u$ believes it is in $V_r$ (resp. not in $V_r$). The meaning
  of status $EB$ and $EF$ will be further detailed in
  Subsection~\ref{err:corr}.

\item $\parent_u \in Lbl$, a {\em parent pointer}. If $u \in V_r$,
  $\parent_u$ should designate a neighbor of $u$, referred to as its
  {\em parent}, and in a terminal configuration, the parent pointers
  exhibit a shortest path from $u$ to $r$.

  Otherwise ($u \notin V_r$), the variable is meaningless. 
\item $\dist_u \in \mathds{R}^+$, the {\em distance} value. If $u \in V_r$, then
  in a terminal configuration, $\dist_u$ gives the weight of the
  shortest path from $u$ to $r$.

  Otherwise ($u \notin V_r$), the variable is meaningless.
\end{itemize}

\subsection{Normal Execution}

Consider any configuration, where every process $u \neq r$ satisfies
$\status_u = I$, and refer to such a configuration as a {\em normal
  initial configuration}. Each configuration reachable from a {\em
  normal initial configuration} is called a {\em normal
  configuration}, otherwise it is an {\em abnormal
  configuration}. Recall that $\status_r = C$ in all
configurations. Then, starting from a normal initial configuration,
all processes in a connected component different from $V_r$ are
disabled forever. Focus now on the connected component $V_r$. Each
neighbor $u$ of $r$ is enabled to execute $\mathbf{R_R}(u)$. A process
eventually chooses $r$ as parent by executing this rule, which in
particular sets its status to $C$. Then, executions of rule
$\mathbf{R_R}$ are asynchronously propagated in $V_r$ until all its
processes have status $C$: when a process $u$ with status $I$ finds
one of its neighbor with status $C$ it executes $\mathbf{R_R}(u)$, {\em i.e.} $u$
takes status $C$ and chooses as parent its neighbor $v$ with status
$C$ such that $\dist_v + \omega(u,v)$ is minimum, $\dist_u$ being updated
accordingly.  In parallel, rules $\mathbf{R_C}$ are executed to reduce
the weight of the tree rooted at $r$: when a process $u$ with status
$C$ can reduce $\dist_u$ by selecting another neighbor with status $C$
as parent, it chooses the one allowing to minimize $\dist_u$ by
executing $\mathbf{R_C}(u)$. Hence, eventually, the system reaches a
terminal configuration, where the tree rooted at $r$ is a
shortest-path tree spanning all processes of $V_r$.

\subsection{Error Correction}\label{err:corr}

Assume now that the system is in an abnormal
configuration. Thanks to the predicate $\abnormalRoot$, some
non-root processes locally detect that their state is inconsistent
with that of their neighbors. We call {\em abnormal roots} such
processes.  Informally (see Algorithm \A for the formal
  definition), a process~$u\neq r$ is an {\em abnormal root} if~$u$
is not isolated ({\em i.e.}, $\status_u \neq I$) and satisfies one of
the following four conditions:
\begin{enumerate}
\item its parent pointer does not designate a neighbor,
\item its parent has status~$I$,
\item its distance value~$\dist_u$ is inconsistent with 
the distance value of its parent, or
\item its status is inconsistent with 
the status of its parent.
\end{enumerate}
Every non-root process $u$ that is not an abnormal root satisfies one
of the two following cases.  Either $u$ is {\em isolated}, {\em i.e.},
$\status_u = I$, or $u$ points to some neighbor ({\em i.e.},
$\parent_u \in \Gamma(u)$) and the state of $u$ is coherent {\em
  w.r.t.} the state of its parent. In this latter case, $u \in
\enfants(\parent_u)$, {\em i.e.}, $u$ is a ``real'' child of its
parent (see Algorithm \A for the formal definition).  Consider a path
$\mathcal P = u_1, \ldots, u_k$ (with $k \geq 1$) such that $u_1$ is either $r$ or an
abnormal root, and $\forall i, 1 \leq i < k, u_{i+1} \in
\enfants(u_i)$.  $\mathcal P$ is acyclic and called a {\em branch}
rooted at $u_1$.  Hence, we define the normal tree $T(r)$ (resp. an
abnormal tree $T(v)$, for any abnormal root $v$) as the set of all
processes that belong to a branch rooted at $r$ (resp. $v$).

Then, the goal is to remove all abnormal trees so that the system
recovers a normal configuration. For each abnormal tree $T$, we have
two cases. In the former case, the abnormal root $u$ of $T$ can join
another tree $T'$ using rule $\mathbf{R_C}(u)$, making $T$ a
subtree of $T'$.  In the latter case, $T$ is entirely removed in a
top-down manner starting from its (abnormal) root $u$.  Now, in that
case, we have to prevent the following situation: $u$ leaves $T$; this
removal creates some trees, each of those is rooted at a previous
child of $u$; and later $u$ joins one of those (created) trees.
Hence, the idea is to freeze $T$, before removing it.  By freezing we
mean assigning each member of the tree to a particular state, here
$EF$, so that (1) no member $v$ of the tree is allowed to execute
$\mathbf{R_C}(v)$, and (2) no process $w$ can join the tree by
executing $\mathbf{R_R}(w)$.  Once frozen, the tree can be safely
deleted from its root to its leaves.

The freezing mechanism (inspired from~\cite{BlinCV03}) is achieved
using the status $EB$ and $EF$, and the rules $\mathbf{R_{EB}}$
  and $\mathbf{R_{EF}}$.  If a process is not involved into any
freezing operation, then its status is $I$ or $C$. Otherwise, it has
status $EB$ or $EF$ and no neighbor can select it as its parent. These
two latter states are actually used to perform a ``Propagation of
Information with Feedback'' \cite{C82j,S83j} in the abnormal
trees. This is why status $EB$ means ``Error Broadcast'' and $EF$
means ``Error Feedback''.  From an abnormal root, the status $EB$ is
broadcast down in the tree using rule $\mathbf{R_{EB}}$. Then,
once the $EB$ wave reaches a leaf, the leaf initiates a convergecast
$EF$-wave using rule $\mathbf{R_{EF}}$. Once the $EF$-wave
reaches the abnormal root, the tree is said to be {\em dead}, meaning
that all processes in the tree have status $EF$ and, consequently, no
other process can join it. So, the tree can be safely deleted from its
abnormal root toward its leaves. There is two possibilities for the
deletion. If the process $u$ to be deleted has a neighbor with status
$C$, then it executes rule $\mathbf{R_R}(u)$ to directly join another
``alive'' tree. Otherwise, $u$ becomes isolated by executing rule
$\mathbf{R_I}(u)$, and $u$ may join another tree later.

Let $u$ be a process belonging to an abnormal tree of which it is not
the root. Let $v$ be its parent. From the previous explanation, it
follows that during the correction, $(\status_v,\status_u) \in
\{(C,C),$ $(EB,C),$ $(EB,EB),$ $(EB,EF),$ $(EF,EF)\}$ until $v$ resets
by $\mathbf{R_R}(v)$ or $\mathbf{R_I}(v)$. Now, due to the arbitrary
initialization, the status of $u$ and $v$ may not be coherent, in this
case $u$ should also be an abnormal root. Precisely, as formally
defined in Algorithm~\ref{alg:A}, the status of $u$ is incoherent {\em
  w.r.t} the status of its parent $v$ if $\status_u \neq \status_v$
and $\status_v \neq EB$.

Actually, the freezing mechanism ensures that if a process is the root
of an abnormal alive tree, it is in that situation since the initial
configuration (see Lemma~\ref{lem:pseudoRoots}, page
\pageref{lem:pseudoRoots}). The polynomial step complexity mainly
relies on this strong property.

\subsection{Example}

An example of synchronous execution of \A is given in
Figure~\ref{fig:ex}.  We consider the network topology given on the
top left of the figure. The names $v_1, \ldots, v_{10}$ are only given
to ease the explanation (recall that we consider semi-anonymous
networks where only the root $r$ is distinguished). The network
contains eleven processes divided into two connected components. Let $v_i$
be a process. In the synchronous execution described from configuration
a) to configuration m), the color of $v_i$ indicates its status
$\status_{v_i}$, according to the legend on the top of the figure. The
number next to $v_i$ gives its distance value, $\dist_{v_i}$. If there
is an arrow outgoing from $v_i$, this arrow designates the neighbor
$u$ of $v_i$ pointed as parent, {\em i.e.}, $\parent_{v_i} =
u$. Otherwise, this means that $\parent_{v_i} \notin \Gamma(v_i)$.

In the initial configuration a), there are two abnormal roots: $v_2$
and $v_9$, indeed $\parent_{v_2} \notin \Gamma(v_2)$ and
$\parent_{v_9} \notin \Gamma(v_9)$. The status of $v_2$ is already
equal to $EB$ and this value should be broadcast down in its
subtree. In contrast, $v_9$ has status $C$ and, consequently, should
initiate the broadcast of $EB$. Note also that $v_{10}$ can reduce its
distance value by modifying its parent pointer. Hence, in the step
a) $\mapsto$ b), $v_9$ takes status $EB$ (rule $\mathbf{R_{EB}}(v_9)$),
$v_{10}$ selects $v_9$ as parent (rule $\mathbf{R_C}(v_{10})$), and
finally $v_3$ the unique child of $v_2$ takes status $EB$ (rule
$\mathbf{R_{EB}}(v_3)$).

In the step b) $\mapsto$ c), $EB$ is propagated down the two abnormal
trees: $v_5$, $v_7$, $v_8$, and $v_{10}$ execute $\mathbf{R_{EB}}$.
In configuration c), the value $EB$ has reached three leaves: $v_5$,
$v_7$, and $v_{10}$. These processes are then enabled to initiate a
convergecast $EF$-wave. Hence, in the step c) $\mapsto$ d),
$v_5$, $v_7$, and $v_{10}$ execute $\mathbf{R_{EF}}$, while the last
leaf $v_4$ takes status $EB$ ($\mathbf{R_{EB}}(v_4)$).

In configuration d), all children of $v_9$ have status $EF$, so $v_9$
is enabled to take status $EF$ too ($\mathbf{R_{EF}}(v_9)$). In
contrast, $v_3$ should wait until its child $v_8$ takes status $EF$.
Hence, in the step d) $\mapsto$ e), $v_9$ takes status $EF$
($\mathbf{R_{EF}}(v_9)$), its abnormal tree becomes frozen, while the
last leaf $v_4$ of the second abnormal tree initiates a convergecast
$EF$-wave (rule $\mathbf{R_{EF}}(v_4)$).

In the step e) $\mapsto$ f), $v_9$ leaves its tree and becomes
isolated by rule $\mathbf{R_I}(v_9)$, while $v_8$ takes status $EF$ by
$\mathbf{R_{EF}}(v_8)$.
Since all its children have now status $EF$, $v_3$ can take status $EF$
by $\mathbf{R_{EF}}(v_3)$ in step f) $\mapsto$ g), while $v5$ and $v_{10}$
become isolated by rule $\mathbf{R_I}$ in the same step.
Remark then that in g), all processes in the connected component $\{v_5,
v_9,v_{10}\}$ are isolated and, since $r$ is not part of this
component, they are disabled forever. In the step g) $\mapsto$ h), the
abnormal root $v_2$ of the remaining abnormal tree takes status $EF$
($\mathbf{R_{EF}}(v_2)$). So, the abnormal tree rooted at $v_2$ is
frozen in configuration h).  In the step h) $\mapsto$ i), $v_2$ leaves
its tree and becomes isolated by rule $\mathbf{R_I}(v_2)$. Then, $v_3$
becomes isolated in step i) $\mapsto$ j) (rule
$\mathbf{R_I}(v_3)$). In step j) $\mapsto$ k), $v_8$ becomes isolated
(rule $\mathbf{R_I}(v_8)$), while $v_7$ joins the normal tree (the
tree rooted at $r$) by rule $\mathbf{R_R}(v_7)$. In the last two
steps, $v_2$, $v_3$, $v_8$, and then $v_4$ successively join the
normal tree by rule $\mathbf{R_R}$, and configuration m) is terminal.

\begin{figure}
\centering
\includegraphics[scale=0.72]{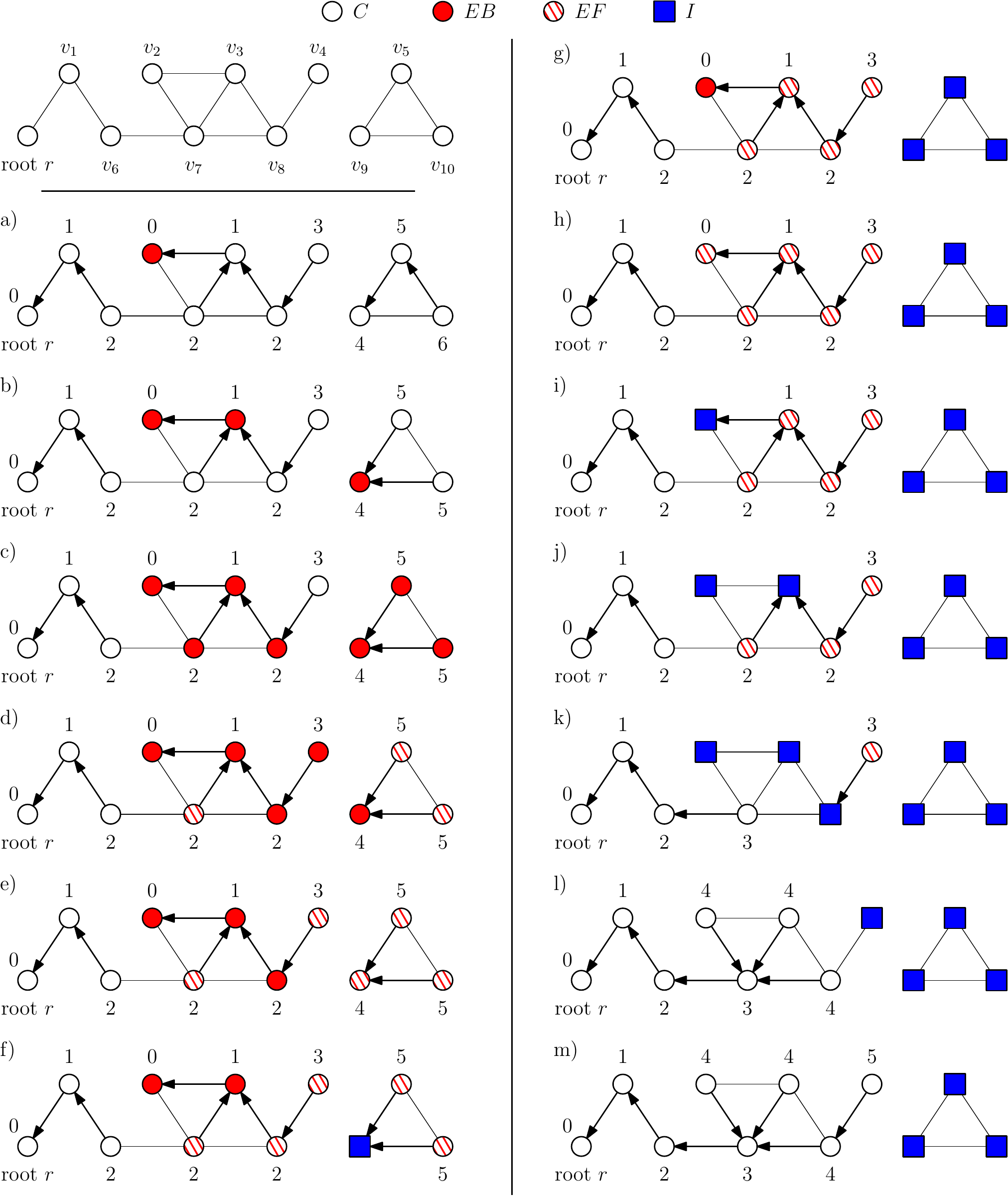}
\caption{A synchronous execution of \A\label{fig:ex}}
\end{figure}

\section{Correctness and Step Complexity  of Algorithm \A}
\label{sec:correctness}

\subsection{Definitions}\label{pred}

Before proceeding with the proof of correctness and the step
complexity analysis, we define some useful concepts.

\begin{definition}[Abnormal Root] 

  Every process $u \neq r$ that satisfies $\abnormalRoot(u)$ is said to
  be an {\em abnormal root}.
\end{definition}

\begin{definition}[Alive Abnormal Root]
  A process~$u \neq r$ is said to be an \emph{alive abnormal root}
  (resp. a \emph{dead abnormal root}) if~$u$ is an abnormal root and
  has a status different from~$EF$ (resp. has status $EF$).
\end{definition}




\begin{definition}[Branch]
  A \emph{branch} is a 
  sequence of processes~$v_1, \cdots ,v_k$ for some integer~$k \geq
  1$, such that~$v_1$ is~$r$ or an abnormal root and, for every~$1
  \leq i < k$, we have~$v_{i+1} \in \enfants(v_{i})$.
  The process~$v_i$ is said to be at \emph{depth}~$i$ and $v_i, \cdots
  ,v_k$ is called a {\em sub-branch}.
  If~$v_1 \neq r$, the branch is said to be \emph{illegal}, otherwise,
  the branch is said to be \emph{legal}.
\end{definition}

\begin{observation}\label{obs:size}
  A branch depth is at most~$\nmax$.  
  A process~$v$ having status~$I$ does not belong to any branch.  
  If a process~$v$ has status~$C$ (resp.~$EF$), then all processes of a
  sub-branch starting at~$v$ have status~$C$ (resp.~$EF$).
\end{observation}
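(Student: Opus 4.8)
The plan is to read off all three assertions from the single defining property of membership in $\enfants$. Whenever $v_{i+1} \in \enfants(v_i)$, the macro forces $\status_{v_i} \neq I$, $\status_{v_{i+1}} \neq I$, $\parent_{v_{i+1}} = v_i$, $\dist_{v_{i+1}} \geq \dist_{v_i} + \omega(v_{i+1},v_i)$, and the disjunction $\status_{v_{i+1}} = \status_{v_i} \vee \status_{v_i} = EB$. Each claim follows from one of these conjuncts, so I would dispatch the two easy statements first and keep the depth bound for last.

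For the claim that no process of status $I$ lies on a branch $v_1,\dots,v_k$, I would distinguish the position of the process $v$. If $v = v_1$, then by definition $v_1$ is either $r$, whose status is the constant $C$, or an abnormal root, which satisfies $\abnormalRoot$ and hence $\status \neq I$; in both cases $\status_{v_1} \neq I$. If $v = v_i$ for some $i > 1$, then $v_i \in \enfants(v_{i-1})$ immediately yields $\status_{v_i} \neq I$.

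For the status-propagation claim, I would fix a sub-branch starting at $v = v_i$ with $\status_v$ equal to $C$ or to $EF$, and induct on the depth. Since neither $C$ nor $EF$ is $EB$, the disjunct $\status_{v_i} = EB$ is false, so $v_{i+1} \in \enfants(v_i)$ forces $\status_{v_{i+1}} = \status_{v_i}$. The status is therefore copied unchanged at every downward step, and every process of the sub-branch inherits the status of $v$.

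The depth bound is where the one real idea enters. First I would show that a branch is a simple path: because all edge weights are strictly positive, $v_{i+1} \in \enfants(v_i)$ gives $\dist_{v_{i+1}} \geq \dist_{v_i} + \omega(v_{i+1},v_i) > \dist_{v_i}$, so the distance values strictly increase along the branch and the $v_i$ are pairwise distinct. Consecutive processes are neighbors, so the whole branch lies in a single connected component; moreover $r$ can occur only as $v_1$, since $\parent_r = \perp$ forbids $r \in \enfants(u)$ for every $u$. Apart from this possible root at $v_1$, all processes of the branch are distinct non-root processes of one component, of which there are at most $\nmax$; this bounds the depth of the branch. The only subtle step in the whole observation is this acyclicity argument, and it rests entirely on the strict positivity of the edge weights; everything else is an immediate unfolding of $\enfants$ and $\abnormalRoot$.
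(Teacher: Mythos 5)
Your unfolding of $\enfants$ and $\abnormalRoot$ is exactly the reasoning the paper leaves implicit (it states this as an Observation, with no written proof), and two of the three claims are handled correctly and completely: a process with status~$I$ can be neither $v_1$ (since $r$ has constant status $C$ and an abnormal root satisfies $\status \neq I$) nor some $v_i$ with $i>1$ (since $\enfants$ excludes status~$I$); and the $C$/$EF$ propagation follows because for $\status_{v_i} \in \{C,EF\}$ the disjunct $\status_{v_i}=EB$ is unavailable, forcing $\status_{v_{i+1}}=\status_{v_i}$ down the sub-branch. Your acyclicity argument is also the right key idea: strictly positive weights make $\dist$ strictly increase along a branch, so its processes are pairwise distinct, and $\parent_r = \perp$ keeps $r$ out of $\enfants(u)$ for every $u$.

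The step that does not hold up is the final count. With the paper's convention that $v_i$ is at depth $i$ (so $v_1$, even when it equals $r$, already sits at depth~$1$), your inventory --- at most $\nmax$ distinct non-root processes plus possibly $r$ at $v_1$ --- bounds a \emph{legal} branch's depth by $\nmax+1$, not $\nmax$, and this cannot be repaired: in a two-process component $\{r,a\}$ with $\parent_a = r$, $\status_a = C$, $\dist_a = \omega(r,a)$, the legal branch $r,a$ has depth $2 = \nmax+1$. So the bound $\nmax$ is exactly correct only for illegal branches, all of whose processes are non-root, distinct, and in one component --- which, notably, is the only way the paper ever uses this observation (the vacuous base cases of Lemmas~\ref{lem:propagatingEF} and~\ref{lem:propagatingI} concern illegal branches). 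In short, your method is the right one and proves everything that is true here; the imprecision is inherited from the statement itself, but your sentence ``this bounds the depth of the branch'' glosses over the legal-branch case, where the claimed constant is off by one. A careful write-up should either restrict the depth bound to illegal branches or state the legal-branch bound as $\nmax+1$.
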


\begin{definition}[Legitimate State]\label{def:leg:state}
  A process~$u$ is said to be in a \emph{legitimate state} if $u$
  satisfies one of the following three conditions:
  \begin{enumerate}
  \item $u = r$,
  \item\label{def:2} $u \neq r$, $u \in V_r$, $\status_u = C$,
    $\dist_u = d(u,r)$, and $\dist_u =
    \dist_{\parent_u}+\omega(u,\parent_u)$, or
  \item $u \notin V_r$ and $\status_u = I$.
  \end{enumerate}
\end{definition}

\begin{observation}
\label{obs:0}
Every process $u \neq r$ such that $\status_u = C$ and $\dist_u \neq
\dist_{\parent_u}+\omega(u,\parent_u)$ is enabled.
\end{observation}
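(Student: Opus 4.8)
The plan is to exhibit, under the stated hypotheses, at least one rule of $u$ whose guard holds. Since $\status_u = C$, I first observe that among the five rules only $\mathbf{R_C}(u)$ and $\mathbf{R_{EB}}(u)$ can possibly be enabled: rule $\mathbf{R_{EF}}(u)$ requires $\status_u = EB$, while both $\mathbf{R_I}(u)$ and $\mathbf{R_R}(u)$ require either $\PReset(u)$ (hence $\status_u = EF$) or $\status_u = I$. It therefore suffices to show that $\mathbf{R_C}(u)$ or $\mathbf{R_{EB}}(u)$ is enabled.

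I would then split on the predicate $\PCorrection(u)$. If $\PCorrection(u)$ holds, the guard of $\mathbf{R_C}(u)$, namely $\status_u = C \wedge \PCorrection(u)$, is satisfied and we are done. So assume $\neg \PCorrection(u)$; it remains to establish the rest of the guard of $\mathbf{R_{EB}}(u)$, that is, $\abnormalRoot(u) \vee \status_{\parent_u} = EB$. If $\parent_u \notin \Gamma(u)$, then $\abnormalRoot(u)$ holds through its first disjunct. Otherwise $\parent_u$ is a neighbor of $u$, and I use the hypothesis $\dist_u \neq \dist_{\parent_u} + \omega(u,\parent_u)$, distinguishing the two possible strict inequalities. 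When $\dist_u < \dist_{\parent_u} + \omega(u,\parent_u)$, this is precisely the third disjunct of $\abnormalRoot(u)$, so $u$ is an abnormal root and $\mathbf{R_{EB}}(u)$ is enabled.

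The delicate case, and the one I expect to be the crux, is $\dist_u > \dist_{\parent_u} + \omega(u,\parent_u)$, since a strict over-estimation of the distance does not appear directly among the disjuncts defining $\abnormalRoot$. Here I would reason on $\status_{\parent_u}$. The value $C$ is impossible: together with $\dist_{\parent_u} + \omega(u,\parent_u) < \dist_u$, it would make $\parent_u$ a witness of $\PCorrection(u)$, contradicting the assumption $\neg \PCorrection(u)$. If $\status_{\parent_u} = I$, the second disjunct of $\abnormalRoot(u)$ holds; if $\status_{\parent_u} = EF$, then $\status_u = C \neq EF$ and $EF \neq EB$, so the fourth disjunct $(\status_u \neq \status_{\parent_u} \wedge \status_{\parent_u} \neq EB)$ holds; and if $\status_{\parent_u} = EB$, the alternative $\status_{\parent_u} = EB$ in the guard of $\mathbf{R_{EB}}(u)$ holds directly. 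In every surviving case $\abnormalRoot(u) \vee \status_{\parent_u} = EB$ is satisfied, so $\mathbf{R_{EB}}(u)$ is enabled and hence $u$ is enabled, as claimed.
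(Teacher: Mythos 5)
Your proof is correct and complete. The paper states this result as an Observation without any proof; your argument --- noting that $\status_u = C$ restricts attention to $\mathbf{R_C}(u)$ and $\mathbf{R_{EB}}(u)$, splitting on \PCorrection$(u)$, and then in the $\neg$\PCorrection$(u)$ case handling $\dist_u < \dist_{\parent_u}+\omega(u,\parent_u)$ via the third disjunct of $\abnormalRoot(u)$ and $\dist_u > \dist_{\parent_u}+\omega(u,\parent_u)$ via a case analysis on $\status_{\parent_u}$ (where $C$ is excluded by $\neg$\PCorrection$(u)$, $I$ and $EF$ give $\abnormalRoot(u)$, and $EB$ satisfies the guard of $\mathbf{R_{EB}}(u)$ directly) --- is exactly the routine guard verification that the paper leaves implicit.
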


\begin{definition}[Legitimate Configuration]\label{def:leg}
  A {\em legitimate configuration} is any configuration where every
  process is in a legitimate state.  We denote by $\mathcal{LC}_{\A}$
  the set of all legitimate configurations of Algorithm \A.
\end{definition}

Let $\gamma$ be a configuration. Let $T_\gamma = (V_r,E_{T_\gamma})$
be the subgraph, where $E_{T_\gamma} = \{\{p,q\} \in E\ |\ p \in V_r
\setminus \{r\} \wedge \parent_p = q\}$. By Definition~\ref{def:leg:state} (point~\ref{def:2}), we deduce the following
observation.

\begin{observation}
  In every legitimate configuration $\gamma$, $T_\gamma$ is a
  shortest-path tree spanning all processes of $V_r$.
\end{observation}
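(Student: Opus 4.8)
The plan is to unpack the definition of a legitimate configuration and then verify the two defining properties of a shortest-path tree: that $T_\gamma$ is a spanning tree of $V_r$ rooted at $r$, and that every root-path in $T_\gamma$ has weight equal to the weighted distance to $r$.

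First I would record what legitimacy forces on each process. By Definition~\ref{def:leg} together with Definition~\ref{def:leg:state}, every non-root process $u \in V_r$ must satisfy point~\ref{def:2}: $\status_u = C$, $\dist_u = d(u,r)$, and $\dist_u = \dist_{\parent_u} + \omega(u,\parent_u)$. The last equality is only meaningful when $\parent_u \in \Gamma(u)$, so $\parent_u$ is a genuine neighbor of $u$, and since $u \in V_r$ its neighbor $\parent_u$ lies in $V_r$ as well. Processes outside $V_r$ are isolated and contribute no edge to $E_{T_\gamma}$. Hence $E_{T_\gamma}$ contains exactly one edge $\{u,\parent_u\}$ per non-root process of $V_r$, that is, $|V_r| - 1$ edges in total.

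The crux is to show that these $|V_r|-1$ edges form a tree rooted at $r$, and here I would exploit the strict positivity of the weights. Following parent pointers from any non-root $u \in V_r$, the relation $\dist_{\parent_u} = \dist_u - \omega(u,\parent_u) < \dist_u$ shows that the distance value strictly decreases at each step. A cycle would force a return to a vertex with the same distance value, which is impossible; thus the parent-pointer graph restricted to $V_r$ is acyclic, and since every non-root vertex has a parent in $V_r$, repeatedly following parent pointers must terminate at the unique vertex with no outgoing parent edge, namely $r$. Consequently every vertex of $V_r$ is connected to $r$ in $T_\gamma$, and a connected graph on $|V_r|$ vertices with $|V_r|-1$ edges is a spanning tree. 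I expect this acyclicity argument, and its reliance on strictly positive weights, to be the main (though short) obstacle.

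Finally I would establish the shortest-path property by a telescoping computation. For $u \in V_r$, let $u = u_0, u_1 = \parent_{u_0}, \ldots, u_m = r$ be the path to the root in $T_\gamma$. Summing the equalities $\dist_{u_i} = \dist_{u_{i+1}} + \omega(u_i,u_{i+1})$ over $i$ telescopes to show that the weight of this path equals $\dist_{u_0} - \dist_{u_m} = \dist_u - 0 = \dist_u = d(u,r)$, so the tree path from $u$ to $r$ is indeed a shortest path in $G$. This holds for every $u \in V_r$, which is exactly the defining property of a shortest-path tree; combined with the spanning-tree conclusion above, it finishes the proof.
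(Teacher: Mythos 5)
Your proof is correct and follows the only natural route: the paper states this as an observation deduced directly from point~\ref{def:2} of Definition~\ref{def:leg:state} without giving any explicit argument, and your write-up simply fills in the verification the paper treats as immediate (strictly positive weights forcing acyclicity of the parent pointers, and the telescoping sum giving the shortest-path property). No gap; you have just made explicit what the paper leaves to the reader.
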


\subsection{Partial Correctness}

We now prove that the set of terminal configurations is exactly the
set of legitimate configurations. We start by proving the following
intermediate statement.

\begin{lemma}\label{lem:IorC}
In any terminal configuration, every process has either status~$I$ or~$C$.
\end{lemma}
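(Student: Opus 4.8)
The plan is to show that neither status~$EB$ nor status~$EF$ can survive in a terminal configuration, via two extremal arguments that exploit the strict positivity of edge weights. Recall $\status_u \in \{I,C,EB,EF\}$ for non-root processes and $\status_r = C$, so it suffices to exclude $EB$ and $EF$ on non-root processes.

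First I would rule out status~$EB$. Suppose, for contradiction, that some process has status~$EB$, and pick among all such processes one, say~$u$, with maximum distance~$\dist_u$. Since the configuration is terminal, rule $\mathbf{R_{EF}}(u)$ is disabled, so~$u$ must have a child $v \in \enfants(u)$ with $\status_v \neq EF$; as membership in $\enfants(u)$ forbids $\status_v = I$, we get $\status_v \in \{C,EB\}$. If $\status_v = EB$, then $v \in \enfants(u)$ gives $\dist_v \geq \dist_u + \omega(v,u) > \dist_u$ (weights being strictly positive), contradicting the maximality of~$\dist_u$. If $\status_v = C$, then since $\parent_v = u$ we have $\status_{\parent_v} = EB$, so the disjunction in the guard of $\mathbf{R_{EB}}(v)$ is satisfied; hence~$v$ is enabled --- it executes $\mathbf{R_C}(v)$ if $\PCorrection(v)$ holds and $\mathbf{R_{EB}}(v)$ otherwise --- contradicting terminality. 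Thus no process has status~$EB$.

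Next I would rule out status~$EF$, reusing the previous step. Observe first that no $EF$ process can be an abnormal root: if~$u$ had status~$EF$ and satisfied $\abnormalRoot(u)$, then $\PReset(u)$ would hold, and depending on whether~$u$ has a neighbor of status~$C$, either $\mathbf{R_R}(u)$ or $\mathbf{R_I}(u)$ would be enabled. So every $EF$ process~$u$ satisfies $\neg\abnormalRoot(u)$; unfolding this predicate (and using $\status_u = EF \neq I$) yields $\parent_u \in \Gamma(u)$, $\dist_u \geq \dist_{\parent_u} + \omega(u,\parent_u)$, and $\status_{\parent_u} \in \{EF,EB\}$. Since status~$EB$ has been excluded, $\status_{\parent_u} = EF$ and $\dist_{\parent_u} < \dist_u$. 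Now taking the $EF$ process~$u$ of minimum distance yields a contradiction, as its parent is an $EF$ process of strictly smaller distance. Hence no process has status~$EF$, and every process has status~$I$ or~$C$.

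The computations involved are routine; the one step that requires care is the $EB$ case, where one must read the guards of $\mathbf{R_C}$ and $\mathbf{R_{EB}}$ closely to see that a status-$C$ child of an $EB$ process is always enabled (the clause $\status_{\parent_v} = EB$ automatically satisfies the last disjunct of $\mathbf{R_{EB}}$, and the two rules jointly cover both truth values of $\PCorrection(v)$). The structural fact making both extremal arguments work is that strictly positive weights force $\dist$ to strictly increase along the $\enfants$ relation, which precludes the infinite descending (resp. ascending) chains of error statuses that the arguments implicitly close off.
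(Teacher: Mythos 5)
Your proposal is correct and follows essentially the same route as the paper's own proof: both parts are the same extremal arguments, excluding status~$EB$ via the $EB$ process of maximum distance (whose status-$C$ children would be enabled for $\mathbf{R_C}$ or $\mathbf{R_{EB}}$, and whose $EB$ children would contradict maximality), and then excluding status~$EF$ via the $EF$ process of minimum distance, which must be an abnormal root and hence enabled for $\mathbf{R_I}$ or $\mathbf{R_R}$. The only cosmetic difference is that you make explicit (via $\neg\abnormalRoot$ and the strict distance decrease to the parent) why the minimal $EF$ process is an abnormal root, a step the paper compresses into ``by construction.''
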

\begin{proof}
  This is trivially true for the root process,~$r$.  Assume that there
  exists a non-root process with status~$EB$ in a terminal
  configuration~$\gamma$.  Consider the non-root process~$u$ with
  status~$EB$ having the largest distance value~$\dist_u$ in~$\gamma$.
  In~$\gamma$, no process~$v$ with status~$C$ can be a child
  of $u$, otherwise either $\mathrm{R_{EB}}$ or~$\mathrm{R_{C}}$ is
  enabled at $v$ in~$\gamma$, a contradiction. Moreover, by maximality
  of $\dist_u$,~$u$ cannot have a child with status~$EB$ in
  $\gamma$. Therefore, in $\gamma$ process~$u$ has no child or it has
  only children with status~$EF$, and thus rule~$\mathrm{R_{EF}}$ is
  enabled at $u$, a contradiction.  Thus, every process has
  status~$C$,~$I$, or~$EF$ in $\gamma$.

  Assume now that there exists a non-root process with status~$EF$ in a
  terminal configuration~$\gamma$.  Consider the process $u$ with
  status~$EF$ having the smallest distance value~$\dist_u$ in~$\gamma$.
  By construction, $u$ is an abnormal root in $\gamma$.
  So, either~$\mathrm{R_{I}}$ or~$\mathrm{R_R}$ is enabled at $u$ in
  $\gamma$, a contradiction.
\end{proof}

The next lemma, Lemma~\ref{lem:leg1}, deals with the connected
components that do not contain $r$, if any. Then, Lemma~\ref{lem:leg2}
deals with the connected component~$V_r$.

\begin{lemma}\label{lem:leg1}
  In any terminal configuration, every process that does not belong
  to~$V_r$ is in a legitimate state.
\end{lemma}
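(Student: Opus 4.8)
The plan is to reduce legitimacy of a process outside $V_r$ to the single condition that its status be $I$, since condition~3 of Definition~\ref{def:leg:state} makes status $I$ both necessary and sufficient for legitimacy of such a process. First I would invoke Lemma~\ref{lem:IorC}: in a terminal configuration every process has status $I$ or $C$, so it suffices to rule out status $C$ for processes not in $V_r$.

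Fix a terminal configuration $\gamma$ and suppose, for contradiction, that some process outside $V_r$ has status $C$. Let $V_u$ be its connected component; note $r \notin V_u$, so $V_u$ contains no status-$C$ ``anchor'' of distance $0$. Since $V_u$ is finite and the set of its status-$C$ processes is nonempty, I would consider the process $w \in V_u$ of status $C$ whose distance value $\dist_w$ is minimum. The heart of the argument is then to exploit that $w$ is disabled in $\gamma$.

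Since $\mathbf{R_C}(w)$ is disabled and $\status_w = C$, we get $\neg\PCorrection(w)$; since $\mathbf{R_{EB}}(w)$ is then also disabled, we obtain $\neg\abnormalRoot(w)$ and $\status_{\parent_w} \neq EB$. Unfolding $\neg\abnormalRoot(w)$ with $\status_w = C \neq I$ forces $\parent_w \in \Gamma(w)$, $\status_{\parent_w} \neq I$, $\dist_w \geq \dist_{\parent_w} + \omega(w,\parent_w)$, and $(\status_w = \status_{\parent_w} \vee \status_{\parent_w} = EB)$. Combined with $\status_{\parent_w} \neq EB$, the last disjunction yields $\status_{\parent_w} = \status_w = C$. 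Moreover, because edge weights are strictly positive, $\dist_{\parent_w} \leq \dist_w - \omega(w,\parent_w) < \dist_w$. As $\parent_w$ is a neighbor of $w$, it lies in the same component $V_u \neq V_r$ and has status $C$ with strictly smaller distance than $w$, contradicting the minimality of $\dist_w$. Hence no process outside $V_r$ has status $C$; every such process has status $I$ in $\gamma$ and is therefore in a legitimate state.

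The step I expect to require the most care is the joint unfolding of the two disabled guards $\mathbf{R_C}$ and $\mathbf{R_{EB}}$ together with the definition of $\abnormalRoot$, so as to extract precisely that $\parent_w$ has status $C$ and strictly smaller distance. The concluding contradiction is essentially an infinite-descent argument that relies on two features already available in the setting: the strict positivity of edge weights (guaranteeing a \emph{strict} decrease along the parent pointer) and the absence of the root in $V_u$ (so no minimal-distance element can legitimately terminate the descent).
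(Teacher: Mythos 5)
Your proof is correct and follows essentially the same route as the paper: both invoke Lemma~\ref{lem:IorC}, take the status-$C$ process of minimum distance value in a component not containing $r$, and use the strict positivity of edge weights to reach a contradiction. The only (immaterial) difference is the arrangement: the paper uses minimality to show this process is an abnormal root and concludes $\mathbf{R_{EB}}$ is enabled, contradicting terminality, whereas you use terminality (the disabled guards of $\mathbf{R_C}$ and $\mathbf{R_{EB}}$) to show it is not an abnormal root and contradict minimality via its parent's strictly smaller distance.
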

\begin{proof}
  Consider, by contradiction, that there exists a process $u$ that
  belongs to the connected component $CC$ other than $V_r$ which is
  not in a legitimate state in some terminal configuration~$\gamma$.
  By definition, $u$ is not the root, moreover it has status
  $C$ in $\gamma$, by Lemma~\ref{lem:IorC}.
  So, consider the process $v$ of $CC$ with status~$C$ having
    the smallest distance value~$\dist_v$ in $\gamma$.  By
    construction,~$v$ is an abnormal root in $\gamma$. Thus,
    rule~$\mathrm{R_{EB}}$ is enabled at $v$ in $\gamma$, a
    contradiction.
\end{proof}

\begin{lemma}\label{lem:leg2}
  In any terminal configuration, every process of~$V_r$ is in a
  legitimate state.
\end{lemma}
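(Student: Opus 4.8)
The plan is to build on the global structure already fixed by \Ref{Lemma}{lem:IorC}: in any terminal configuration~$\gamma$, every process has status~$I$ or~$C$, and the root~$r$ is trivially in a legitimate state. First I would rule out status~$I$ inside~$V_r$. If some non-root process of~$V_r$ had status~$I$, then, since~$r$ has status~$C$ and~$V_r$ is connected, there would exist a status-$I$ process~$u$ adjacent to a status-$C$ process~$v$; but then the guard of~$\mathbf{R_R}(u)$ holds, contradicting terminality. Hence every process of~$V_r$ has status~$C$.

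Next I would show that no non-root process of~$V_r$ is an abnormal root in~$\gamma$. Indeed, if~$u \in V_r\setminus\{r\}$ were an abnormal root, then, as~$\status_u = C$, either~$\PCorrection(u)$ holds and~$\mathbf{R_C}(u)$ is enabled, or~$\neg\PCorrection(u)$ holds and, since~$\abnormalRoot(u)$ is true, $\mathbf{R_{EB}}(u)$ is enabled; both contradict terminality. Therefore, for every~$u \in V_r\setminus\{r\}$ we get~$\parent_u \in \Gamma(u)$, $\status_{\parent_u} = C$, and~$\dist_u \geq \dist_{\parent_u}+\omega(u,\parent_u)$ (negating the disjuncts of~$\abnormalRoot$). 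Moreover, terminality forbids~$\mathbf{R_C}(u)$, so~$\neg\PCorrection(u)$ holds for every such~$u$, which yields the relaxation inequality~$\dist_u \leq \dist_v + \omega(u,v)$ for every neighbor~$v$ of~$u$ in~$V_r$ (all of which have status~$C$). Specializing to~$v=\parent_u$ and combining with the reverse inequality above gives the local consistency~$\dist_u = \dist_{\parent_u}+\omega(u,\parent_u)$.

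It then remains to establish~$\dist_u = d(u,r)$ for every~$u \in V_r$. For the lower bound I would follow the parent pointers from~$u$: since edge weights are strictly positive, the local consistency forces the distance values to strictly decrease along parent edges, so this walk is acyclic; as every non-root process of~$V_r$ points to a parent in~$V_r$, the walk can only terminate at~$r$, the sole process of~$V_r$ without a parent in~$V_r$. Telescoping the local consistency along this walk and using~$\dist_r = 0$ shows that its total weight equals~$\dist_u$, exhibiting a genuine $u$-to-$r$ path of weight~$\dist_u$, whence~$\dist_u \geq d(u,r)$. For the upper bound I would fix a shortest path~$w_0=u, w_1, \dots, w_\ell = r$ in~$V_r$ and telescope the relaxation inequality~$\dist_{w_i} \leq \dist_{w_{i+1}}+\omega(w_i,w_{i+1})$ along it; again with~$\dist_r=0$ this gives~$\dist_u \leq \sum_i \omega(w_i,w_{i+1}) = d(u,r)$. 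The two bounds yield~$\dist_u = d(u,r)$, and together with~$\status_u = C$ and the local consistency this is precisely condition~\ref{def:2} of \Ref{Definition}{def:leg:state}, so~$u$ is in a legitimate state.

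The enabledness arguments of the first two steps are routine, merely unfolding the guards of~$\mathbf{R_R}$, $\mathbf{R_C}$, and~$\mathbf{R_{EB}}$. I expect the crux to be the equality~$\dist_u = d(u,r)$: its lower bound hinges on converting the purely local consistency into an actual path reaching~$r$ (hence on the acyclicity driven by the strict positivity of weights), while its upper bound is where the global shortest-path property is genuinely extracted, through the telescoping of the relaxation inequalities that terminality guarantees at every process of~$V_r$.
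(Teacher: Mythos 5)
Your proof is correct, and it follows the paper's overall decomposition --- eliminate status~$I$ inside~$V_r$ using Lemma~\ref{lem:IorC} and rule~$\mathbf{R_R}$, establish the local consistency $\dist_u = \dist_{\parent_u}+\omega(u,\parent_u)$, then prove the two bounds $\dist_u \geq d(u,r)$ and $\dist_u \leq d(u,r)$ --- but you execute the two bounds, which are the crux, by a genuinely different technique. The paper proves both by minimal-counterexample contradictions: for the lower bound it takes a violator of $\dist_u \geq d(u,r)$ with minimum $\dist_u$ and observes that its parent, having a strictly smaller distance value, cannot itself be a violator, so the triangle inequality yields a contradiction; for the upper bound it takes a violator of $\dist_u \leq d(u,r)$ with minimum $d(u,r)$ and shows that the next hop~$v$ on a shortest path satisfies $\dist_v = d(v,r)$, so that $\PCorrection(u)$ holds and $\mathbf{R_C}(u)$ is enabled, contradicting terminality. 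You instead argue constructively: strict positivity of the weights makes $\dist$ strictly decrease along parent pointers, so the parent walk is acyclic and must end at~$r$ (the only process of~$V_r$ without a parent in~$V_r$), and telescoping exhibits an actual $u$-to-$r$ path of weight exactly~$\dist_u$; for the upper bound you telescope the relaxation inequality coming from $\neg\PCorrection$ along a fixed shortest path. The two routes rest on exactly the same local facts (local consistency, the relaxation inequality at every process of~$V_r$, positive weights), so both are sound; the paper's extremal arguments are more compact and sidestep having to argue that the parent walk terminates at~$r$, whereas your version is more transparent and additionally makes explicit the global structure --- that in a terminal configuration the parent pointers of~$V_r$ form a shortest-path tree rooted at~$r$ --- which the paper only records separately as an observation about~$T_\gamma$. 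A further small merit of your write-up is that your explicit step ruling out abnormal roots in~$V_r$ (via $\mathbf{R_C}$ and $\mathbf{R_{EB}}$) supplies the detail behind the paper's terse claim that $\dist_u = \dist_{\parent_u}+\omega(\parent_u,u)$ holds ``otherwise $\mathrm{R_C}$ is enabled at some process of $V_r$''.
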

\begin{proof}
  Assume, by contradiction, that there exists a terminal
  configuration $\gamma$ where at least one process in the connected
  component~$V_r$ is not in a legitimate state.

  Assume also that there exists some process of~$V_r$ that has
  status~$I$ in $\gamma$.  Consider now a process~$u$ of~$V_r$ such
  that in $\gamma$, $u$ has status~$I$ and at least one of its
  neighbors has status~$C$. Such a process exists because no process
  has status~$EB$ or~$EF$ in $\gamma$ (Lemma~\ref{lem:IorC}), but at
  least one process of $V_r$ has status~$C$, namely~$r$. Then,
  $\mathrm{R_R}$ is enabled at~$u$ in $\gamma$, a contradiction.  So,
  every process in~$V_r$ must have status~$C$ in $\gamma$.  Moreover,
  for all processes in~$V_r$, we have~$\dist_u = \dist_{\parent_u} +
  \omega(\parent_u,u)$ in $\gamma$, otherwise $\mathrm{R_{C}}$ is
  enabled at some process of $V_r$ in $\gamma$.

  Assume now that there exists a process~$u$ such that~$\dist_u <
  d(u,r)$ in $\gamma$.  Consider a process~$u$ of~$V_r$ having the
  smallest distance value $\dist_u$ among the processes in~$V_r$ such
  that~$\dist_u < d(u,r)$ in $\gamma$.  By definition, $u \neq r$ and
  we have~$\dist_u > \dist_{\parent_u}$ in $\gamma$, so
  $\dist_{\parent_u} \geq d(\parent_u,r)$ in $\gamma$. Hence, we can
  conclude that $\dist_u \geq d(u,r)$ in $\gamma$, a contradiction.
  So, every process~$u$ in~$V_r$ satisfies~$\dist_u \geq d(u,r)$ in
  $\gamma$.

  Finally, assume that there exists a process~$u$ such that~$\dist_u >
  d(u,r)$ in $\gamma$.  Consider a process~$u$ in~$V_r$ having the
  smallest distance to~$r$ among the processes in~$V_r$ such
  that~$\dist_u > d(u,r)$ in $\gamma$.  By definition, $u \neq r$ and
  there exists some process~$v$ in~$\Gamma(u)$ such that $d(u,r) =
  d(v,r) + \omega(u,v)$ in $\gamma$. Thus, we have~$\dist_v = d(v,r)$
  in $\gamma$.  So,~$\mathrm{R_C}$ is enabled at $u$ in $\gamma$, a
  contradiction.
\end{proof}

After noticing that any legitimate configuration is a terminal one (by
construction of the algorithm), we deduce the following corollary from
the two previous lemmas.

\begin{corollary}\label{coro:legterm}
  For every configuration $\gamma$, $\gamma$ is terminal if and only
  if $\gamma$ is legitimate.
\end{corollary}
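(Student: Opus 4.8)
The plan is to prove the two implications separately, reusing the groundwork already established. The forward direction ($\gamma$ terminal $\Rightarrow$ $\gamma$ legitimate) requires essentially no new work: Lemma~\ref{lem:leg1} guarantees that in any terminal configuration every process outside $V_r$ is in a legitimate state, while Lemma~\ref{lem:leg2} guarantees the same for every process of $V_r$. Since these two process sets partition $V$, every process is then in a legitimate state, and hence $\gamma$ is legitimate by Definition~\ref{def:leg}. I would dispatch this direction in a single sentence.

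The substance lies in the backward direction ($\gamma$ legitimate $\Rightarrow$ $\gamma$ terminal), which is the ``by construction'' claim asserted just before the corollary. Here I would fix an arbitrary legitimate configuration $\gamma$ and show that no rule is enabled at any process. Since the root carries only constants and owns no rule, it suffices to examine non-root processes, each of which, by Definition~\ref{def:leg:state}, falls into one of two cases: either (i) $u \in V_r$ with $\status_u = C$, $\dist_u = d(u,r)$ and $\dist_u = \dist_{\parent_u} + \omega(u,\parent_u)$, or (ii) $u \notin V_r$ with $\status_u = I$. Case (ii) is easy: every neighbor of $u$ lies in $u$'s component, which is not $V_r$, and so is itself isolated by legitimacy; hence the status guards of $\mathbf{R_C}$, $\mathbf{R_{EB}}$, $\mathbf{R_{EF}}$ and the $\PReset$ guard of $\mathbf{R_I}$ all fail immediately, and $\mathbf{R_R}(u)$ fails because no neighbor has status $C$.

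Case (i) is where the real checking happens, and I expect it to be the main obstacle, though of a routine rather than a conceptual kind: one must verify that all five guards evaluate to false simultaneously. The guards of $\mathbf{R_{EF}}$, $\mathbf{R_I}$ and $\mathbf{R_R}$ fail at once because they demand $\status_u \in \{EB,EF,I\}$ while $\status_u = C$. For $\mathbf{R_C}$ I would argue that $\PCorrection(u)$ is false: any neighbor $v$ with $\status_v = C$ is itself in $V_r$ and hence satisfies $\dist_v = d(v,r)$ by legitimacy, so $\dist_v + \omega(u,v) = d(v,r) + \omega(u,v) \geq d(u,r) = \dist_u$ by the shortest-path (triangle) inequality, contradicting the strict inequality required by $\PCorrection$. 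Finally, for $\mathbf{R_{EB}}$ I would show that $\abnormalRoot(u)$ is false by refuting each of its four disjuncts in turn: $\parent_u \in \Gamma(u)$ holds, $\status_{\parent_u} \neq I$ since $\parent_u$ is either $r$ or a legitimate process of $V_r$, the distance equality rules out $\dist_u < \dist_{\parent_u} + \omega(u,\parent_u)$, and $\status_u = \status_{\parent_u} = C$ kills the last disjunct; since moreover $\status_{\parent_u} = EB$ is false, the entire trigger of $\mathbf{R_{EB}}$ collapses. Combining both directions then yields the claimed equivalence.
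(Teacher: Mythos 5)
Your proposal is correct and takes essentially the same route as the paper: the forward direction is deduced from Lemmas~\ref{lem:leg1} and~\ref{lem:leg2} exactly as the paper does, and the backward direction is precisely what the paper dismisses in one line as holding ``by construction of the algorithm.'' Your guard-by-guard verification of that backward direction (including the triangle-inequality argument refuting $\PCorrection$ and the disjunct-by-disjunct refutation of $\abnormalRoot$) is sound and merely makes explicit what the paper leaves implicit.
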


\subsection{Termination}

In this section, we establish that every execution of Algorithm~$\A$
under a distributed unfair daemon is finite.  Furthermore, we
compute the following bound on the number of steps of every
execution:~$[\maxi\nmax^3+(3-\maxi)\nmax+3](n-1)$, where $n$ is the
number of processes, $\maxi$ is the maximum weight of an edge, and
$\nmax$ is the maximum number of non-root processes in a connected
component, when all weights are strictly positive integers.

\begin{lemma}\label{lem:pseudoRoots}
No alive abnormal root is created along any execution.
\end{lemma}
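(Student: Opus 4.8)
The plan is to argue by contradiction at the level of a single step. I would fix a step $\gamma \mapsto \gamma'$ and suppose that some process $u \neq r$ is an alive abnormal root in $\gamma'$ while being \emph{not} an alive abnormal root in $\gamma$, and then derive a contradiction. Since only $u$ can write its own variables, the natural split is on whether $u$ executes a rule during the step.

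First I would suppose that $u$ moves. If it executes $\mathbf{R_I}$ or $\mathbf{R_{EF}}$, its new status is $I$ or $EF$, so $u$ is not \emph{alive} in $\gamma'$, a contradiction. If it executes $\mathbf{R_C}$ or $\mathbf{R_R}$, then $\compute(u)$ sets $\status_u := C$ and points $\parent_u$ to some neighbor $v$ that had status $C$ in $\gamma$, with $\dist_u = \dist_v + \omega(u,v)$ evaluated in $\gamma$; I would then check that all four disjuncts of $\abnormalRoot(u)$ fail in $\gamma'$, using that a status-$C$ neighbor can only execute $\mathbf{R_C}$ (which strictly decreases its distance) or $\mathbf{R_{EB}}$ (which turns it to $EB$ without changing its distance), so $v$ neither becomes $I$, nor increases $\dist_v$, nor breaks status-coherence with the fresh $C$-status of $u$. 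The remaining subcase, $u$ executing $\mathbf{R_{EB}}$, is delicate: its guard gives $\abnormalRoot(u) \vee \status_{\parent_u} = EB$ in $\gamma$; the first disjunct would make $u$ already an alive abnormal root in $\gamma$, contradicting the assumption, so I would assume $\status_{\parent_u} = EB$ and $\neg\abnormalRoot(u)$ in $\gamma$.

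Second I would suppose that $u$ does not move. Then $\status_u$, $\parent_u$, $\dist_u$ are unchanged, $u$ is alive so $\status_u \in \{C, EB\}$, and since $u$ is not an abnormal root in $\gamma$, all four disjuncts of $\abnormalRoot(u)$ are false there; in particular $p := \parent_u \in \Gamma(u)$ is a fixed neighbor with $\status_p \neq I$, $\dist_u \geq \dist_p + \omega(u,p)$, and $\status_u = \status_p \vee \status_p = EB$, from which I would deduce $\status_p \in \{C, EB\}$ in $\gamma$. For $u$ to turn into an abnormal root in $\gamma'$, some disjunct must flip solely because $p$ moved: the neighbor-membership disjunct cannot (the graph is fixed); $\status_p$ cannot become $I$ (that needs $\mathbf{R_I}(p)$, hence $\status_p = EF$ in $\gamma$, excluded); and $\dist_p$ cannot increase (the rules available to a $C$- or $EB$-process only decrease or preserve the distance). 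This leaves the status-coherence disjunct $(\status_u \neq \status_p \wedge \status_p \neq EB)$, which can become true only if $p$ passes from $EB$ to $EF$.

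The crux in both the $\mathbf{R_{EB}}(u)$ subcase and the no-move case is therefore the same, and it is the main obstacle: ruling out that the parent $p$, with $\status_p = EB$ in $\gamma$, simultaneously turns to $EF$. I would resolve it by observing that $u$ is then a genuine child of $p$, i.e.\ $u \in \enfants(p)$ in $\gamma$ --- this follows by checking the defining conjuncts of $\enfants$: $\status_p = EB \neq I$, $\status_u \neq I$, $\parent_u = p$, $\dist_u \geq \dist_p + \omega(u,p)$, and the clause $(\status_u = \status_p \vee \status_p = EB)$ holds because $\status_p = EB$. Since $u \in \enfants(p)$ and $\status_u \in \{C, EB\}$, so $\status_u \neq EF$, the guard of $\mathbf{R_{EF}}(p)$, which requires every process of $\enfants(p)$ to have status $EF$, is false in $\gamma$; hence $p$ cannot move to $EF$ in this step. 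Consequently $p$ keeps status $EB$, $u$ is not an abnormal root in $\gamma'$, and every case yields the desired contradiction, proving the lemma.
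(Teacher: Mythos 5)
Your proof is correct and follows essentially the same strategy as the paper's: a step-level argument showing that a process which is not an alive abnormal root cannot become one, split according to the rule executed (or not) by $u$, with the same two key ingredients --- a freshly computed parent has status $C$ and can only decrease its distance or switch to $EB$, and a parent with status $EB$ is blocked from executing $\mathrm{R_{EF}}$ precisely because $u \in \enfants(\parent_u)$ has a status different from $EF$. The paper organizes the cases by the status of $u$ in $\gamma'$ rather than by the rule $u$ executes, but the content and the crux of the argument are identical.
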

\begin{proof}
  Let $\gamma \mapsto \gamma'$ be a step. Let~$u$ be a non-root process
  that is not an \emph{alive abnormal root} in~$\gamma$, and let~$v$
  be the process such that $\parent_{u}=v$ in~$\gamma'$.
  If the status of~$u$ is~$EF$ or~$I$ in~$\gamma'$, then~$u$ is not an
  alive abnormal root in~$\gamma'$. So, let us assume now that the
  status of~$u$ is either~$EB$ or~$C$ in~$\gamma'$.

  Consider then the case where~$u$ has status~$EB$ in~$\gamma'$. The
  only rule $u$ can execute in~$\gamma \mapsto \gamma'$
  is~$\mathrm{R_{EB}}$. So, $\status_{u} \in \{C,EB\}$
  in~$\gamma$. Moreover, whether $u$ executes $\mathrm{R_{EB}}$ or
  not, $\parent_u = v$ in~$\gamma$. Since $\status_{u} \in \{C,EB\}$
  and $u$ is not an alive abnormal root in~$\gamma$, we can deduce
  that $u$ is not an abnormal root in $\gamma$ (whether dead or alive). So, if $\status_{u} =
  EB$ in~$\gamma$, then $\status_{v} = EB$ in~$\gamma$ too. Otherwise,
  $u$ has status $C$ in $\gamma$ while not being an abnormal root in
  $\gamma$: it executes $\mathrm{R_{EB}}(u)$ in~$\gamma \mapsto
  \gamma'$ because $\status_{v} = EB$ in~$\gamma$. Hence, in either
  case $v$ has status~$EB$ in~$\gamma$, and this in particular means
  that $v \neq r$ (this status does not exist for $r$).  Moreover, $u$
  belongs to $\enfants(v)$ in~$\gamma$ (again because $\parent_u = v$
  and $u$ is not an abnormal root in~$\gamma$).  So, $v$ is not
  enabled in~$\gamma$ and $u \in \enfants(v)$ remains true in
  $\gamma'$. Hence, we can conclude that~$u$ is still not an alive
  abnormal root in~$\gamma'$.

  Consider now the other case, {\em i.e.}, $u$ has status~$C$
  in~$\gamma'$.  During~$\gamma \mapsto \gamma'$, the only rules
  that~$u$ may execute are~$\mathrm{R_R}$ or~$\mathrm{R_C}$. 
  If $u$ executes $\mathrm{R_R}$ or~$\mathrm{R_C}$, we have
  $\status_{v}=C$ in~$\gamma$ (because it is a requirement to execute
  any of these rules) and consequently, the only rules that~$v$ may
  execute in~$\gamma \mapsto \gamma'$ are~$\mathrm{R_C}$
  or~$\mathrm{R_{EB}}$.
  Otherwise ({\em i.e.}, $u$ does not execute any rule in $\gamma
  \mapsto \gamma'$), $\parent_{u} = v$ and $\status_{u}=C$ already
  hold in~$\gamma$. In this case, $u$ being not an alive abnormal root
  and $\status_{u}=C$ in~$\gamma$ implies that $u \in \enfants(v)$ and
  thus $\status_{v} \in \{C,EB\}$ in $\gamma$, which further implies
  that the only rules that~$v$ may execute in~$\gamma \mapsto \gamma'$
  in this case are~$\mathrm{R_C}$ or~$\mathrm{R_{EB}}$.
  Thus, in either case, during~$\gamma \mapsto \gamma'$,~$v$ either
  takes the status~$EB$, decreases its distance value, or does not
  change the value of its variables.  Consequently, $u$ belongs
  to $\enfants(v)$ in~$\gamma'$, which prevents $u$ from being an
  alive abnormal root in~$\gamma'$.
\end{proof}

Let $AAR(\gamma)$ be the set of alive abnormal roots in any
configuration $\gamma$.  From the previous lemma, we know that, for
every step $\gamma \mapsto \gamma'$, we have $AAR(\gamma') \subseteq
AAR(\gamma)$ (precisely,  for every process $u$ and every step $\gamma \mapsto
\gamma'$, $u \notin AAR(\gamma) \Rightarrow u \notin
AAR(\gamma')$). So, we can use the notion of \emph{$u$-segment}
(inspired from~\cite{ACDDP14}) to bound the total number of steps in
an execution.

\begin{definition}[$u$-Segment]
  Let $u$ be any non-root process.  Let $e = \gamma_0, \gamma_1,
  \cdots$ be an execution.

If there is no step $\gamma_i \mapsto \gamma_{i+1}$ in $e$, where
there is a non-root process in $V_u$ which is an alive abnormal root in
$\gamma_i$, but not in $\gamma_{i+1}$, then the {\em first
  $u$-segment} of $e$ is $e$ itself and there is no other $u$-segment.

Otherwise, let $\gamma_i \mapsto \gamma_{i+1}$ be the first step of
$e$, where there is a non-root process in $V_u$ which is an alive
abnormal root in $\gamma_i$, but not in $\gamma_{i+1}$.  The {\em
  first $u$-segment} of $e$ is the prefix $\gamma_0, \cdots,
\gamma_{i+1}$. The {\em second $u$-segment} of~$e$ is the first
$u$-segment of the suffix~$\gamma_{i+1}, \gamma_{i+2}, \cdots$, and so
forth.
\end{definition}

By Lemma~\ref{lem:pseudoRoots}, we have 

\begin{observation}\label{obs:seg}
  For every non-root process $u$, for every execution $e$, $e$ contains at
  most~$\nmax+1$ $u$-segments, because there are initially at
  most~$\nmax$ alive abnormal roots in $V_u$.
\end{observation}

\begin{lemma}
  Let $u$ be any non-root process. During a $u$-segment, if $u$
  executes the rule~$\mathrm{R_{EF}}$, then $u$ does not execute any
  other rule in the remaining of the $u$-segment.
\end{lemma}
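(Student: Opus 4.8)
The plan is to prove that within a single $u$-segment, once $u$ executes $\mathbf{R_{EF}}$, it cannot execute any further rule until the segment ends. I would argue by examining what state $u$ is in immediately after executing $\mathbf{R_{EF}}$ and showing that $u$ remains stuck in status $EF$ as an abnormal root, with none of its rules ever becoming enabled, for the rest of the segment.

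First I would observe that to execute $\mathbf{R_{EF}}(u)$, process $u$ must have status $EB$ in the pre-step configuration and status $EF$ afterward. The key structural fact to establish is that $u$ is an abnormal root at the moment it takes status $EF$, and in fact a \emph{dead} abnormal root (status $EF$). I expect this to follow from the freezing invariant: when $u$ had status $EB$, its parent $v$ satisfied $\status_v \in \{EB\}$ or $u$ was itself an abnormal root; once $u$ becomes $EF$ while $v$ still has status $EB$ (or $v$ is no longer a valid parent), the predicate $\abnormalRoot(u)$ becomes true because the pair $(\status_v,\status_u)=(EB,EF)$ makes $u$ coherent, but any transition of $v$ away from $EB$ breaks coherence — I would need to check the guard $(\status_u \neq \status_{\parent_u} \wedge \status_{\parent_u} \neq EB)$ carefully here. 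The cleaner route is to appeal to the definition of $u$-segment together with Lemma~\ref{lem:pseudoRoots}: within a $u$-segment no alive abnormal root in $V_u$ disappears, and no new alive abnormal root is created, so the set of alive abnormal roots in $V_u$ is constant throughout the segment.

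Next, once $u$ has status $EF$, I would enumerate the five rules and rule each one out. Rules $\mathbf{R_C}$ and $\mathbf{R_{EB}}$ require status $C$, and $\mathbf{R_{EF}}$ requires status $EB$, so none of these can fire while $\status_u = EF$. The only candidates are $\mathbf{R_I}$ and $\mathbf{R_R}$, both of which require $\PReset(u)$, i.e. $\status_u = EF \wedge \abnormalRoot(u)$. The crucial point is therefore whether $u$ can become an abnormal root (hence a \emph{dead} abnormal root) before the segment ends — and this is exactly where the $u$-segment definition bites. If $u$ were to become a dead abnormal root and then execute $\mathbf{R_I}$ or $\mathbf{R_R}$, it would leave status $EF$, re-entering the pool of things that could later become alive abnormal roots; but the segment is defined precisely so that the disappearance of an alive abnormal root in $V_u$ terminates it. I would show that for $u$ to fire $\mathbf{R_R}$ or $\mathbf{R_I}$ it must first transition from a non-abnormal-root $EF$ state to an abnormal-root $EF$ state, which requires some neighbor (its parent $v$) to change, and tracing this back shows the only way is that an alive abnormal root ceased to exist — contradicting that we are still inside the same $u$-segment.

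The main obstacle I anticipate is the bookkeeping around the transition of $u$'s parent $v$. The definition of $\abnormalRoot(u)$ depends on $\status_{\parent_u}$, $\dist_{\parent_u}$, and whether $\parent_u \in \Gamma(u)$, all of which can change when $v$ executes a rule. I would need to argue that any such change of $v$ that turns $u$ into an (automatically dead) abnormal root corresponds to $v$ itself resetting via $\mathbf{R_R}$ or $\mathbf{R_I}$, and that this reset event is precisely the endpoint of the current $u$-segment (since $v \in V_u$ and the departure of $v$ from an $EF$/abnormal configuration is what the segment boundary detects). Making this correspondence airtight — relating the local guard $\abnormalRoot(u)$ becoming true to the global segment-terminating event of an alive abnormal root disappearing in $V_u$ — is the delicate step; everything else is a routine guard-by-guard elimination.
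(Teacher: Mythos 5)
Your high-level strategy is the same as the paper's: reduce ``$u$ executes a second rule after $\mathrm{R_{EF}}$'' to ``an alive abnormal root of $V_u$ ceased to be one,'' which by definition of $u$-segment means the segment has ended. However, two concrete steps in your sketch fail. First, your guard elimination presupposes that immediately after executing $\mathrm{R_{EF}}$, process $u$ is in a \emph{non}-abnormal-root $EF$ state, so that some change at its parent is required before $\PReset(u)$ can hold. This is false precisely when $u$ is itself the abnormal root of its branch (the paper's case ``$v=u$''): there $u$ is an \emph{alive} abnormal root with status $EB$ just before the step, becomes a \emph{dead} abnormal root the moment it executes $\mathrm{R_{EF}}$, and $\mathrm{R_I}$/$\mathrm{R_R}$ can become enabled with no neighbor changing at all. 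In that case the argument must be different: the $\mathrm{R_{EF}}$ step itself is the disappearance of an alive abnormal root (namely $u$), hence it is the \emph{last} step of the current $u$-segment and the ``remaining of the segment'' is empty. Your proposal never makes this case explicit.

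Second, and more seriously, you identify the wrong event as the segment boundary. You claim that the reset of $u$'s parent $v$ via $\mathrm{R_R}$ or $\mathrm{R_I}$ ``is precisely the endpoint of the current $u$-segment.'' It is not: to execute those rules $v$ must satisfy $\PReset(v)$, i.e., $\status_v = EF$ and $\abnormalRoot(v)$, so at that moment $v$ is a \emph{dead} abnormal root, and a dead abnormal root changing state is not a step in which ``an alive abnormal root in $\gamma_i$ is not one in $\gamma_{i+1}$'' --- no segment ends there. The disappearance you need sits strictly higher in the branch: $v$'s reset requires $v$ to be a dead abnormal root, which (since $v$ had status $EB$ when $u$ froze) requires either that $v$ was already an alive abnormal root --- whose own $\mathrm{R_{EF}}$ execution is the boundary --- or that $v$'s parent reset first, and so on inductively up to the depth-$1$ process of the branch, an alive abnormal root whose switch to $EF$ is the actual boundary event. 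This induction along the branch is exactly the ``delicate step'' you defer, and the one-level version you sketch points at a non-boundary event, so the proof does not close. The paper avoids the bottom-up recursion by arguing top-down in one stroke: just before the $\mathrm{R_{EF}}$ step every branch containing $u$ is rooted at an alive abnormal root (possibly $u$ itself); just before $u$'s next rule $u$ is a dead abnormal root of every branch containing it; hence that branch root must have executed $\mathrm{R_{EF}}$ in the meantime, and this is the disappearance placing the two steps in distinct $u$-segments.
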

\begin{proof}
  Let $\seg_u$ be a $u$-segment. Let~$s_1$ be a step of $\seg_u$ in
  which $u$ executes~$\mathrm{R_{EF}}$. Let~$s_2$ be the next step in
  which $u$ executes its next rule. (If $s_1$ or $s_2$ do not exist,
  then the lemma trivially holds for $\seg_u$.)
  Just before~$s_1$, all branches containing~$u$ have an alive
  abnormal root, namely the non-root process~$v$ at depth~$1$ in any
  of these branches. (Note that we may have $v=u$.) On the other hand,
  just before~$s_2$, $u$ is the dead abnormal root of all branches it
  belongs to. This implies that $v$ must have executed the
  rule~$\mathrm{R_{EF}}$ in the meantime and thus is not an alive
  abnormal root anymore when the step $s_2$ is executed.  Therefore,
  $s_1$ and~$s_2$ belong to two distinct $u$-segments of the
  execution.
\end{proof}

\begin{corollary}\label{cor:language}
  Let $u$ be a non-root process.  The sequence of rules executed by
  $u$ during a $u$-segment belongs to the following language:
  $(\mathrm{R_I} + \varepsilon)(\mathrm{R_R} +
  \varepsilon)\mathrm{R_C}^*(\mathrm{R_{EB}} +
  \varepsilon)(\mathrm{R_{EF}} + \varepsilon)$.
\end{corollary}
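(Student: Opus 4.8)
The plan is to read off, from the guards and actions in Algorithm~\ref{alg:A}, the effect each rule has on $\status_u$, and then argue that the sequence of rules $u$ executes is forced to be a single traversal of the resulting ``status graph''. First I would observe that $\status_u$ is written only by $u$ itself, so between two consecutive rules executed by $u$ the status of $u$ stays constant; hence the status of $u$ just before a given rule equals the status just after the previous rule it executed. Reading the code then gives the following transitions on the status component: $\mathrm{R_C}$ requires status $C$ and leaves it $C$; $\mathrm{R_{EB}}$ requires $C$ and sets $EB$; $\mathrm{R_{EF}}$ requires $EB$ and sets $EF$; $\mathrm{R_I}$ requires $EF$ (through $\PReset$) and sets $I$; and $\mathrm{R_R}$ requires $EF$ or $I$ and sets $C$. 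Thus the sequence of statuses visited by $u$ is a walk in the directed graph with edges $C\!\to\!C$, $C\!\to\!EB$, $EB\!\to\!EF$, $EF\!\to\!I$, $EF\!\to\!C$, and $I\!\to\!C$, each edge labelled by the corresponding rule.

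Next I would invoke the previous lemma, which states that within a $u$-segment the rule $\mathrm{R_{EF}}$, if executed at all, is executed last. Consequently, in any $u$-segment the word of rules executed by $u$ splits as a prefix containing no $\mathrm{R_{EF}}$, followed by at most one trailing occurrence of $\mathrm{R_{EF}}$. It then suffices to characterize the admissible prefixes and append the optional final $\mathrm{R_{EF}}$.

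For the prefix I would delete the edge $EB\!\to\!EF$ (labelled $\mathrm{R_{EF}}$) from the status graph and check that the only remaining cycle is the self-loop $C\!\to\!C$ labelled $\mathrm{R_C}$: since $EF$ then has no incoming edge and $EB$ has no outgoing edge, neither can lie on a cycle, and no cycle passes through $I$ or through $C$ other than the self-loop. Hence, along the prefix, each of $\mathrm{R_I}$, $\mathrm{R_R}$, $\mathrm{R_{EB}}$ is executed at most once while $\mathrm{R_C}$ may repeat, and their relative order is forced by the unique topological ordering $EF\prec I\prec C\prec EB$ of the acyclic part: $\mathrm{R_I}$ (from $EF$ to $I$) can only precede $\mathrm{R_R}$ (from $EF$ or $I$ to $C$), which can only precede the $\mathrm{R_C}^{*}$ self-loops at $C$, which can only precede $\mathrm{R_{EB}}$ (leaving $C$ for $EB$). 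This yields exactly $(\mathrm{R_I}+\varepsilon)(\mathrm{R_R}+\varepsilon)\mathrm{R_C}^{*}(\mathrm{R_{EB}}+\varepsilon)$ for the prefix; appending the optional trailing $\mathrm{R_{EF}}$ gives the claimed language.

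The argument is in essence a finite-automaton verification, so no single step is genuinely hard; the one place to be careful is the justification that the prefix cannot loop, and this is precisely where the previous lemma does the real work. Without it, a process could cycle $C\to EB\to EF\to C$ repeatedly (via $\mathrm{R_{EB}}$, $\mathrm{R_{EF}}$, $\mathrm{R_R}$) and produce words outside the stated language. I would therefore make explicit that every nontrivial cycle in the full status graph passes through the edge $EB\!\to\!EF$, so forbidding a second $\mathrm{R_{EF}}$ (equivalently, forcing it to be last) removes all cycles except the harmless $\mathrm{R_C}$ self-loop, which is exactly the $\mathrm{R_C}^{*}$ freedom in the language.
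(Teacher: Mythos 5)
Your proof is correct and takes essentially the same route as the paper: the corollary is stated there as an immediate consequence of the preceding lemma (that $\mathrm{R_{EF}}$, once executed, is $u$'s last rule in the segment) combined with the status preconditions and effects read off from the guards and actions of the rules, which is exactly the automaton argument you make explicit. You also correctly pinpoint the role of that lemma, namely that it is what breaks the cycle $C \to EB \to EF \to C$ that would otherwise allow words outside the stated language.
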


We use the notion of \emph{maximal causal chain} to further analyze
the number of steps in a $u$-segment.

\begin{definition}[Maximal Causal Chain]
  Let $u$ be a non-root process and~$\seg_u$ be any $u$-segment. A
  {\em maximal causal chain of $\seg_u$} rooted at~$u_0 \in V_u$ is a
  maximal sequence of actions $a_1,a_{2},\cdots,a_k$ executed
  in~$\seg_u$ such that the action~$a_1$ sets $\parent_{u_1}$ to~$u_0
  \in V_u$ not later than any other action by~$u_0$ in~$\seg_u$, and
  for all~$2 \leq i \leq k$, the action~$a_i$ sets $\parent_{u_i}$
  to~$u_{i-1}$ after the action~$a_{i-1}$ but not later than
  $u_{i-1}$'s next action.
\end{definition}

\begin{observation}\label{obs:5} ~
\begin{itemize}
\item An action $a_i$ belongs to a maximal causal chain if and only if
  $a_i$ consists in a call to the macro $\compute$ by a non-root
  process.
\item Only actions of Rules~$\mathrm{R_R}$ and~$\mathrm{R_C}$ contain
  the execution of~$\compute$.
\end{itemize}
 Let $u$ be a non-root process and~$\seg_u$ be any $u$-segment. Let~ $a_1,a_{2},\cdots,a_k$ be a maximal
causal chain of $\seg_u$ rooted at~$u_0$.
\begin{itemize}
\item For all~$ 1 \leq i \leq k$, $a_i$ consists in the execution of
  $\compute$ by~$u_i$ ({\em i.e.},~$u_i$ executes the
  rule~$\mathrm{R_R}$ or~$\mathrm{R_C}$) where $u_i \in V_u$.
\item Denote by $ds_{\seg_u,v}$ the distance value of process~$v$ at
  the beginning of~$\seg_u$.  For all~$1 \leq i \leq k$, $a_i$
  sets~$\dist_{u_i}$ to
  $ds_{\seg_u,u_0}+\sum_{j=1}^{j=i}\weight(u_j,u_{j-1})$, where $u_i$
  is the process that executes $a_i$.
\end{itemize}
\end{observation}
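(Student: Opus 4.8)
The plan is to dispatch the four items in order, the first two being purely syntactic and the last two by induction on the index $i$ along the chain.

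For the first two items, I would argue directly from the code of Algorithm~\ref{alg:A}. By the definition of a maximal causal chain, an action $a_i$ participates in such a chain exactly when it assigns a value to a parent pointer; but in \A the only place where a non-root process writes its $\parent$ variable is inside the macro $\compute$ (the root's variables are constant, and rules $\mathrm{R_I}$, $\mathrm{R_{EB}}$, $\mathrm{R_{EF}}$ modify only $\status$). Conversely, every invocation of $\compute$ does set a parent pointer and therefore serves as a link in a chain. This yields the equivalence of the first item; the second item is immediate from the code, since $\compute$ is called precisely by $\mathrm{R_R}$ and $\mathrm{R_C}$.

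For the third item, the fact that each $a_i$ is a $\compute$-call is already provided by the first two items, so only $u_i \in V_u$ needs checking, which I would do by induction on $i$. The base case holds because the chain is rooted at $u_0 \in V_u$ by definition. For the step, $a_i$ executes $\compute$ at $u_i$ and sets $\parent_{u_i} := u_{i-1}$, and the assignment performed by $\compute$ selects the parent among $\Gamma(u_i)$; hence $u_{i-1} \in \Gamma(u_i)$, so $u_i$ is a neighbor of $u_{i-1}$ and thus lies in the same connected component, giving $u_i \in V_u$.

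The fourth item is the only genuine computation, and I would again induct on $i$, the crux being the two timing clauses in the definition of a maximal causal chain. The macro sets $\dist_{u_i} := \dist_{\parent_{u_i}} + \weight(u_i,\parent_{u_i}) = \dist_{u_{i-1}} + \weight(u_i,u_{i-1})$, so everything reduces to identifying the value of $\dist_{u_{i-1}}$ that $a_i$ reads. For $i=1$, the clause stating that $a_1$ sets $\parent_{u_1}$ to $u_0$ \emph{not later than any other action by $u_0$} guarantees that $u_0$ has not touched its variables before $a_1$ reads them, so this value is exactly $ds_{\seg_u,u_0}$, giving $\dist_{u_1}=ds_{\seg_u,u_0}+\weight(u_1,u_0)$. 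For $i>1$, the clause placing $a_i$ \emph{after $a_{i-1}$ but not later than $u_{i-1}$'s next action} guarantees that $u_{i-1}$ executes nothing between $a_{i-1}$ and $a_i$; hence the value read is the one written by $a_{i-1}$, which by the induction hypothesis equals $ds_{\seg_u,u_0}+\sum_{j=1}^{i-1}\weight(u_j,u_{j-1})$, and adding $\weight(u_i,u_{i-1})$ produces the claimed formula. The only delicate point is the composite-atomicity bookkeeping: when $a_i$ and a neighbor's action occur in the same step, $a_i$ reads that neighbor's value as of the start of the step; the two ``not later than'' clauses are exactly what force these reads to coincide with the initial value of $u_0$ and with the freshly written value of $u_{i-1}$, respectively, so no real obstacle remains.
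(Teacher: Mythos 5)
Your proposal is correct and takes the same route as the paper, which states this as an Observation with no proof at all, treating it as immediate from the code of Algorithm~\ref{alg:A} and the definition of a maximal causal chain; your write-up is precisely that implicit verification made explicit (code inspection for the first two items, and induction along the chain using the two ``not later than'' timing clauses together with the composite-atomicity reading convention for the last two). The only step you assert rather than prove --- that every $\compute$ call can be embedded in \emph{some} maximal causal chain, i.e.\ the converse half of the first item --- is a routine backward-tracing argument, and is exactly the kind of detail the paper itself elides.
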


For the next lemmas and theorems, we recall that $\nmax \leq n-1$ is
the maximum number of non-root processes in a connected component of
$G$.

\begin{lemma}\label{lem:causalchain}
  Let $u$ be a non-root process. All actions in a maximal causal chain
  of a $u$-segment are caused by different non-root processes of
  $V_u$.  Moreover, an execution of~$\compute$ by some non-root
  process~$v$ never belongs to any maximal causal chain rooted at~$v$.
\end{lemma}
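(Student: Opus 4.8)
The plan is to reduce both assertions to a single monotonicity property of the distance values written along a $u$-segment. First I would prove the following claim: within one $u$-segment $\seg_u$, the successive values that a fixed non-root process $v$ writes into $\dist_v$ through calls to $\compute$ form a strictly decreasing sequence (in execution order). This follows from \Ref{Corollary}{cor:language}: the rules $v$ executes in $\seg_u$ spell a word of $(\mathrm{R_I}+\varepsilon)(\mathrm{R_R}+\varepsilon)\mathrm{R_C}^*(\mathrm{R_{EB}}+\varepsilon)(\mathrm{R_{EF}}+\varepsilon)$, so the only rules calling $\compute$ are a single optional $\mathrm{R_R}$ followed by a run of $\mathrm{R_C}$'s. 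Each $\mathrm{R_C}$ is guarded by $\PCorrection(v)$, and $\compute$ sets $\dist_v$ to $\min_{w}(\dist_w+\omega(v,w))$, which is strictly smaller than the current $\dist_v$ because edge weights are strictly positive; hence every $\compute$ after the first one strictly decreases $\dist_v$, and the whole sequence of written values strictly decreases over time.

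For the first assertion, suppose toward a contradiction that $u_i=u_j=v$ for some $i<j$. Since the chain is temporally ordered ($a_\ell$ occurs after $a_{\ell-1}$), action $a_i$ is executed strictly before $a_j$, so by the monotonicity claim the distance written by $a_i$ is strictly larger than the one written by $a_j$. But by \Ref{Observation}{obs:5} the value written by $a_\ell$ equals $ds_{\seg_u,u_0}+\sum_{m=1}^{\ell}\weight(u_m,u_{m-1})$, which is strictly increasing in $\ell$ because all weights are positive. These two facts contradict each other, so the processes $u_1,\dots,u_k$ are pairwise distinct; they are non-root processes of $V_u$ by \Ref{Observation}{obs:5}.

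For the second assertion, suppose $u_i=u_0$ with $i\ge 1$. The case $i=1$ is immediate: $a_1$ would set $\parent_{u_0}=u_0$, but $\compute$ always picks a parent in $\Gamma(u_0)$ and $u_0\notin\Gamma(u_0)$. So assume $i\ge 2$; then $a_i$ is a $\compute$ performed by $u_0$ and, by \Ref{Observation}{obs:5}, it writes $\dist_{u_0}=ds_{\seg_u,u_0}+\sum_{m=1}^{i}\weight(u_m,u_{m-1})>ds_{\seg_u,u_0}$. On the other hand, $a_1$ sets $\parent_{u_1}=u_0$ no later than $u_0$'s first action of $\seg_u$; since $\compute$ may select $u_0$ as a parent only when $\status_{u_0}=C$, and $u_0$ performs no action between the start of $\seg_u$ and its first action, $u_0$ keeps status $C$ and distance $ds_{\seg_u,u_0}$ up to that first action. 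A status-$C$ process can only fire $\mathrm{R_C}$ or $\mathrm{R_{EB}}$; firing $\mathrm{R_{EB}}$ would, by \Ref{Corollary}{cor:language}, forbid any later $\compute$ by $u_0$, contradicting the existence of $a_i$. Hence $u_0$'s first action is $\mathrm{R_C}$, which strictly decreases its distance below $ds_{\seg_u,u_0}$; by the monotonicity claim every subsequent $\compute$ of $u_0$ keeps it strictly below $ds_{\seg_u,u_0}$, so in particular $a_i$ writes a value $<ds_{\seg_u,u_0}$, contradicting the inequality above.

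I expect the delicate point to be the second assertion, specifically ruling out that $u_0$'s first $\compute$ is an $\mathrm{R_R}$, which could raise the distance and break the monotonicity used everywhere else. This is exactly where the clause ``$a_1$ sets $\parent_{u_1}$ to $u_0$ not later than any other action by $u_0$'' in the definition of a maximal causal chain is needed: it pins $u_0$'s status to $C$ at its first action, forcing that action to be $\mathrm{R_C}$ rather than $\mathrm{R_R}$.
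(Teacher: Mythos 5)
Your proof is correct and rests on the same two pillars as the paper's own argument: Corollary~\ref{cor:language} (which forces the distance values a fixed process writes within one $u$-segment to be non-increasing, indeed strictly decreasing after the first $\compute$) played against Observation~\ref{obs:5} (distances written along a chain strictly increase), giving the contradiction; your closing remark about the ``not later than'' clause pinning $u_0$'s status to $C$ is exactly the fact the paper also exploits. The only difference is packaging — the paper runs one unified contradiction (an increase of $\dist_v$ would force an intervening $\mathrm{R_R}$, hence a segment boundary), while you isolate a monotonicity claim and handle the root case separately — so this is essentially the same approach.
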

\begin{proof}
  First note that any rule~$\mathrm{R_C}$ executed by a process~$v$
  makes the value of~$\dist_v$ decrease.

  Assume now, by the contradiction, that there exists a process $v$
  such that, in some maximal causal chain $a_1,a_{2},\cdots,a_k$
  of a $u$-segment, $v$ is used as parent in some action $a_i$
  and executes the action $a_j$, with $j>i$. The value of~$\dist_{v}$
  is strictly larger just after the action~$a_j$ than just before the
  action~$a_i$. This implies that process $v$ must have executed the
  rule~$\mathrm{R_R}$ in the meantime.  So, $a_i$ and $a_j$ are
  executed in two different $u$-segments by Corollary~\ref{cor:language}
  and the fact that $v$ has status $C$ just before the
  action~$a_i$. Consequently, they do not belong to the same maximal
  causal chain, a contradiction.
 
  Therefore, all actions in a maximal causal chain are caused by
  different processes, and a process never executes an action in a
  maximal causal chain it is the root of. As all actions in a maximal
  causal chain are executed by processes in the same connected
  component, we are done.
\end{proof}

\begin{definition}[$S_{\seg_u,v}$]
  Given a non-root process $u$ and a $u$-segment~$\seg_u$, we define
  $S_{\seg_u,v}$ as the set of all the distance values obtained after
  executing an action belonging to any maximal causal chain of~$\seg_u$ rooted
  at~process $v$ ($v \in V_u$)).
\end{definition}

Note that, from Observation~\ref{obs:5} and Lemma~\ref{lem:causalchain}, we have the following observation:

\begin{observation}\label{obs:finite}
The size of the set $S_{\seg_u,v}$ is bounded by a function of the
number of processes in $V_u$.
\end{observation}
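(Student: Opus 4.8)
The plan is to express each element of $S_{\seg_u,v}$ as the fixed initial distance $ds_{\seg_u,v}$ of $v$ plus the weight of a \emph{simple} path in $V_u$ starting at $v$, and then to invoke the fact that a finite graph contains only finitely many such paths. First I would recall from Observation~\ref{obs:5} that in any maximal causal chain $a_1,\ldots,a_k$ rooted at $v=u_0$, the action $a_i$ sets $\dist_{u_i}$ to precisely $ds_{\seg_u,u_0}+\sum_{j=1}^{i}\weight(u_j,u_{j-1})$. Hence every value that enters $S_{\seg_u,v}$ has this closed form and is entirely determined by the prefix $u_0,u_1,\ldots,u_i$ of the chain, since $ds_{\seg_u,v}$ is fixed for the whole segment and the edge weights $\weight$ do not change during the execution.

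Next I would argue that the underlying sequence $u_0=v,u_1,\ldots,u_k$ of any such chain is a simple path in $V_u$. By Lemma~\ref{lem:causalchain}, all the $u_i$ (for $1 \leq i \leq k$) are pairwise distinct non-root processes of $V_u$, and $v$ itself never acts in a chain rooted at $v$, so $u_0,\ldots,u_k$ are all distinct. Moreover, each action $a_j$ sets $\parent_{u_j}=u_{j-1}$ through a call to $\compute$, which can only assign a \emph{neighbor} as parent; thus $\{u_{j-1},u_j\}$ is an edge for every $j$, and any prefix $u_0,\ldots,u_i$ is therefore a simple path of $V_u$ originating at $v$.

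It then remains to count. Since two chains sharing the same underlying process sequence produce identical distance values at corresponding positions, each distinct value of $S_{\seg_u,v}$ is indexed by a distinct simple-path prefix starting at $v$; distinct prefixes yielding equal numerical values can only decrease the cardinality, so the number of simple-path prefixes is a valid upper bound. Writing $N=|V_u|$, this number is at most $\sum_{i=0}^{N-1}\frac{(N-1)!}{(N-1-i)!}$, which depends on $N$ alone, giving the claimed bound.

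I do not anticipate a serious obstacle, since the statement only requires boundedness by \emph{some} function of $|V_u|$ and no sharp constant is needed; the two key ingredients---the closed-form distance expression and the distinctness of a chain's processes---are already supplied by Observation~\ref{obs:5} and Lemma~\ref{lem:causalchain}. The only point deserving minor care is the remark that the map from path-prefixes to distance values need not be injective, so that the path count bounds, rather than equals, $|S_{\seg_u,v}|$.
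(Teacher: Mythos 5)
Your proof is correct and follows essentially the same route as the paper, which states this observation as an immediate consequence of Observation~\ref{obs:5} (the closed-form distance expression) and Lemma~\ref{lem:causalchain} (distinctness of the processes in a chain); your write-up simply makes explicit the counting of process sequences (simple paths from $v$) that the paper leaves implicit. The care you take about non-injectivity of the map from path prefixes to values is appropriate and does not change the conclusion.
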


\begin{lemma}\label{lem:Xnmax}
  Given a non-root process $u$ and a $u$-segment~$\seg_u$, if the size
  of~$S_{\seg_u,v}$ is bounded by~$\X$ for all~process $v \in V_u$,
  then the number of~$\compute$ executions done by $u$ in~$\seg_u$ is
  bounded by~$\X(\nmax-1)$.
\end{lemma}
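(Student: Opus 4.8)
The plan is to count the $\compute$ executions of $u$ in $\seg_u$ by first showing that they assign \emph{pairwise distinct} distance values to $u$, and then distributing these values among the sets $S_{\seg_u,v}$, one value per execution.

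First I would pin down the distinctness. By Corollary~\ref{cor:language}, the rules executed by $u$ during $\seg_u$ form a word in $(\mathrm{R_I}+\varepsilon)(\mathrm{R_R}+\varepsilon)\mathrm{R_C}^{*}(\mathrm{R_{EB}}+\varepsilon)(\mathrm{R_{EF}}+\varepsilon)$, so the only rules that call $\compute$ are the (at most one) $\mathrm{R_R}$ and the maximal block of $\mathrm{R_C}$'s that follows it. As already noted in the proof of Lemma~\ref{lem:causalchain}, every execution of $\mathrm{R_C}$ by $u$ strictly decreases $\dist_u$; moreover the first $\mathrm{R_C}$ decreases $\dist_u$ below the value installed by the preceding $\mathrm{R_R}$ (or below the value $u$ had at the start of $\seg_u$, if no $\mathrm{R_R}$ is executed). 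Hence the successive values of $\dist_u$ produced by the $\compute$ executions of $u$ in $\seg_u$ are strictly decreasing, and in particular pairwise distinct. Consequently, the number of $\compute$ executions of $u$ in $\seg_u$ equals the number of distinct distance values $u$ takes during $\seg_u$.

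Next I would attribute each such value to a set $S_{\seg_u,v}$. By Observation~\ref{obs:5}, every $\compute$ performed by the non-root process $u$ is an action that belongs to exactly one maximal causal chain of $\seg_u$; let $v$ denote the root of that chain. By the very definition of $S_{\seg_u,v}$, the distance value installed by this action lies in $S_{\seg_u,v}$, and by Lemma~\ref{lem:causalchain} this root satisfies $v\in V_u$ and $v\neq u$ (a process never executes an action in a chain rooted at itself). Thus each $\compute$ of $u$ can be labelled by a pair $(v,d)$ with $v\in V_u\setminus\{u\}$ and $d\in S_{\seg_u,v}$. Since the values $d$ are pairwise distinct by the previous paragraph, the map sending a $\compute$ execution to its label is injective into $\bigsqcup_{v\in V_u\setminus\{u\}} S_{\seg_u,v}$, whence the number of $\compute$ executions of $u$ is at most $\sum_{v\in V_u\setminus\{u\}}|S_{\seg_u,v}|\leq \X\cdot|V_u\setminus\{u\}|$.

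It remains to bound the number of admissible chain-roots by $\nmax-1$, and this is where I expect the real care to be needed. By Lemma~\ref{lem:causalchain} the actions of a maximal causal chain are performed by pairwise distinct \emph{non-root} processes of $V_u$, none of them being the root of that chain; hence the chains carrying an action of $u$ are rooted at processes of $V_u\setminus\{u\}$. When $V_u\neq V_r$ all processes of $V_u$ are non-root, so $|V_u\setminus\{u\}|\leq\nmax-1$ and the bound follows immediately. The subtle case is $V_u=V_r$, in which $V_u\setminus\{u\}$ also contains the root $r$: here one must verify that $r$ need not be counted as an extra contributor, exploiting that $r$ supplies the single fixed seed $\dist_r=0$ to every chain it roots, so that its contribution is absorbed while keeping the number of relevant roots at $\nmax-1$. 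I anticipate that this bookkeeping around $r$ is the main obstacle; once it is settled, combining it with the injection of the previous paragraph yields the announced bound $\X(\nmax-1)$.
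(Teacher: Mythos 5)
Your strategy is the paper's own: first, pairwise distinctness of the values produced by $u$'s $\compute$ executions (via Corollary~\ref{cor:language} and the strict decrease of $\dist_u$ under $\mathrm{R_C}$), then attribution of each value to the set $S_{\seg_u,v}$ indexed by the root $v \in V_u \setminus \{u\}$ of the maximal causal chain containing that action (Observation~\ref{obs:5} and Lemma~\ref{lem:causalchain}). The gap is in your last paragraph, and you flag it yourself: you never actually prove the case $V_u = V_r$. When $r \in V_u$, the index set $V_u \setminus \{u\}$ contains $r$ plus up to $\nmax - 1$ other non-root processes, so your injection only yields $\X \cdot |V_u \setminus \{u\}| \leq \X\nmax$. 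Your proposed repair --- that $r$'s ``fixed seed'' $\dist_r = 0$ lets the contribution of $S_{\seg_u,r}$ be ``absorbed'' --- is not an argument, and it is not obviously true: $S_{\seg_u,r}$ consists of values of the form $\sum_{j \leq i}\omega(u_j,u_{j-1})$ along $r$-rooted chains, whereas each $S_{\seg_u,v}$ with $v$ non-root is offset by the arbitrary initial value $ds_{\seg_u,v}$, so these sets can be pairwise disjoint; moreover $u$ can perfectly well point to $r$, or join an $r$-rooted chain, during the segment, so $S_{\seg_u,r}$ genuinely contributes. As written, your argument establishes $\X(\nmax-1)$ only when $r \notin V_u$, and $\X\nmax$ otherwise.

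To be fair, you have put your finger on a step that the paper itself glosses over: its proof simply asserts that $\bigcup_{v \in V_u \setminus \{u\}} S_{\seg_u,v}$ has size at most $\X(\nmax-1)$, which tacitly uses $|V_u \setminus \{u\}| \leq \nmax - 1$ and therefore skips exactly the case you worried about ($r$ is a legitimate chain root, and $S_{\seg_u,r}$ is a legitimate member of that union when $r \in V_u$). So you did not miss an idea that the paper supplies; the paper contains no additional argument at this point. Note also that conceding the weaker bound $\X\nmax$ is essentially harmless downstream: carried through the computation preceding Theorem~\ref{theo:compl}, it gives a step bound of $(\X\nmax^2 + \X\nmax + 3\nmax + 3)(n-1)$ instead of $(\X\nmax^2 + 3\nmax - \X + 3)(n-1)$, leaving the announced $O(\maxi\nmax^3 n)$ stabilization time unchanged. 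But if the constant $\X(\nmax-1)$ of the lemma is to be kept as stated, an actual argument discounting $S_{\seg_u,r}$ is needed --- by you and by the paper alike.
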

\begin{proof}
  Except possibly the first, all $\compute$ executions done by a $u$
  in a $u$-segment~$\seg_u$ are done through the
  rule~$\mathrm{R_C}$. For all these, the variable $\dist_u$ is always
  decreasing. Therefore, all the values of $\dist_u$ obtained by the
  $\compute$ executions done by~$u$ are different. By definition of
  $S_{\seg_u,v}$ and by Lemma~\ref{lem:causalchain}, all these values
  belong to the set $\bigcup_{v \in V_u \setminus \{u\}}S_{\seg_u,v}$,
  which has size at most $\X(\nmax-1)$.
\end{proof}

By definition, each step contains at least one action, made by a
non-root process.  Let $u$ be any non-root process.  Assume that, in
any $u$-segment $\seg_u$, the size of~$S_{\seg_u,v}$ is bounded
by~$\X$ for all~process $v \in V_u$. So, the number of step of $u$ in
$\seg_u$ is bounded by~$\X(\nmax-1)+3$, by Lemma~\ref{lem:Xnmax} and
Corollary~\ref{cor:language}. Moreover, recall that each execution
contains at most $\nmax+1$ $u$-segments
(Observation~\ref{obs:seg}). So, $u$ executes in at most $\X\nmax^2
+3\nmax-\X+3$ steps. Finally, as $u$ is an arbitrary non-root process
and there are $n-1$ non-root processes, follows.

\begin{theorem}\label{theo:compl}
  If the size of~$S_{\seg_u,v}$ is bounded by~$\X$ for all~non-root
  process $u$,~for all $u$-segment $\seg_u$, and for all process $v$
  in $V_u$, then the total number of steps during any execution, is
  bounded by $(\X\nmax^2 +3\nmax-\X+3)(n-1)$.
\end{theorem}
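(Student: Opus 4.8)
The plan is to aggregate the per-process, per-segment bounds already assembled in the paragraph immediately preceding the theorem statement, and to multiply through by the number of non-root processes. I would first fix an arbitrary non-root process $u$ and an arbitrary $u$-segment $\seg_u$, and bound the number of steps in which $u$ participates during $\seg_u$. The key ingredient is Lemma~\ref{lem:Xnmax}: under the hypothesis that $|S_{\seg_u,v}| \leq \X$ for every process $v \in V_u$, the number of $\compute$ executions performed by $u$ in $\seg_u$ is at most $\X(\nmax-1)$. Each $\compute$ execution corresponds to a single action of rule $\mathrm{R_C}$ or $\mathrm{R_R}$, so this bounds the combined number of $\mathrm{R_C}$ and $\mathrm{R_R}$ actions of $u$ in $\seg_u$.

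Next I would invoke Corollary~\ref{cor:language}, which constrains the sequence of rules executed by $u$ during a single $u$-segment to lie in the language $(\mathrm{R_I}+\varepsilon)(\mathrm{R_R}+\varepsilon)\mathrm{R_C}^*(\mathrm{R_{EB}}+\varepsilon)(\mathrm{R_{EF}}+\varepsilon)$. This tells me that, apart from the $\mathrm{R_C}$ and $\mathrm{R_R}$ actions already counted, $u$ executes each of $\mathrm{R_I}$, $\mathrm{R_{EB}}$, and $\mathrm{R_{EF}}$ at most once in $\seg_u$, contributing an additive constant of at most $3$. Hence the number of steps in which $u$ acts during $\seg_u$ is at most $\X(\nmax-1)+3 = \X\nmax - \X + 3$.

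I would then sum over segments and over processes. By Observation~\ref{obs:seg}, every execution contains at most $\nmax+1$ $u$-segments, so the total number of steps in which $u$ acts across the whole execution is at most $(\nmax+1)(\X\nmax-\X+3)$; expanding gives $\X\nmax^2 + 3\nmax - \X + 3$ after collecting terms (the $\X\nmax$ and $-\X\nmax$ cancel). Since every step of the execution consists of at least one action performed by some non-root process, the total number of steps is at most the sum over all $n-1$ non-root processes of their individual step counts, yielding the claimed bound $(\X\nmax^2+3\nmax-\X+3)(n-1)$. The only subtlety here, and the point I would be careful about, is the transition from ``steps in which a given process $u$ acts'' to ``total number of steps'': this relies on the fact that each step involves at least one non-root action, so that charging each step to one of its acting non-root processes never undercounts; the union bound over the $n-1$ processes then delivers the result. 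Everything else is bookkeeping, since the hard combinatorial work---bounding $\compute$ executions and constraining the rule sequence---is already packaged into Lemma~\ref{lem:Xnmax} and Corollary~\ref{cor:language}.
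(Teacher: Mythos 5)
Your proposal is correct and follows essentially the same route as the paper's own argument: the per-segment bound of $\X(\nmax-1)+3$ obtained by combining Lemma~\ref{lem:Xnmax} (for the $\compute$, i.e.\ $\mathrm{R_R}$/$\mathrm{R_C}$, actions) with Corollary~\ref{cor:language} (for the at most three remaining rule executions), then multiplication by the $\nmax+1$ $u$-segments of Observation~\ref{obs:seg}, and finally the observation that every step contains at least one non-root action, so summing over the $n-1$ non-root processes yields $(\X\nmax^2+3\nmax-\X+3)(n-1)$. The algebra and the charging argument you flag as the ``only subtlety'' are exactly what the paper does.
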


Let $\maxi = \max_{\{u,v\} \in E} \omega(u,v)$. If all weights are
strictly positive integers, then the size of any~$S_{\seg_u,v}$, where
$u$ is a non-root process and $v \in V_u$, is bounded by~$\maxi\nmax$,
because~$S_{\seg_u,v} \subseteq [ds_{\seg_u,v} + 1, ds_{\seg_u,v} +
  \maxi(n_{cc}-1)]$, where $n_{cc} \leq \nmax+1$ is the number of
processes in $V_u$.  Hence, we deduce the following theorem from
Theorem~\ref{theo:compl}, Observation~\ref{obs:finite}, and
Corollary~\ref{coro:legterm}.

\begin{theorem}
  Algorithm~$\A$ is silent self-stabilizing under the distributed
  unfair daemon for the set $\mathcal{LC}_{\A}$ and, when all weights
  are strictly positive integers, its stabilization time in steps is
  at most~$[\maxi\nmax^3+(3-\maxi)\nmax+3](n-1)$, {\em i.e.},
  $O(\maxi\nmax^3n)$.
\end{theorem}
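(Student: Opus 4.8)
The plan is to assemble the final theorem from the machinery already established rather than to prove anything new from scratch. The statement has two parts: (1) Algorithm $\A$ is silent self-stabilizing under the distributed unfair daemon for $\mathcal{LC}_\A$, and (2) a concrete step bound of $[\maxi\nmax^3+(3-\maxi)\nmax+3](n-1)$ when all edge weights are strictly positive integers. I would handle these separately, since the silence/self-stabilization part is qualitative (termination plus correct terminal configurations) while the step bound is a quantitative instantiation of \Ref{Theorem}{theo:compl}.

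For part (1), silence means every execution is finite, which follows immediately once we have \emph{any} finite upper bound on the number of steps. The cleanest route is to invoke \Ref{Theorem}{theo:compl} together with \Ref{Observation}{obs:finite}: the latter guarantees that each $S_{\seg_u,v}$ has size bounded by some function of $|V_u|$, i.e. by some finite $\X$, and then \Ref{Theorem}{theo:compl} gives a finite bound $(\X\nmax^2+3\nmax-\X+3)(n-1)$ on the total number of steps. Hence all executions are finite. For the self-stabilization requirement, I would appeal directly to \Ref{Corollary}{coro:legterm}, which states that a configuration is terminal if and only if it is legitimate; in particular every terminal configuration lies in $\mathcal{LC}_\A$. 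Together with finiteness, this matches exactly the two bullets of \Ref{Definition}{def:selfstabilization}, so $\A$ is silent and self-stabilizing for $\mathcal{LC}_\A$ under the distributed unfair daemon.

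For part (2), the task is to specialize the generic bound to the integer-weight case by pinning down the value of $\X$. When all weights are strictly positive integers, every execution of the macro $\compute$ in a maximal causal chain rooted at $v$ increases the distance along the chain by at least $1$ and by at most $\maxi$ per hop (by the second bullet of \Ref{Observation}{obs:5}), and a chain visits distinct non-root processes of $V_u$ (\Ref{Lemma}{lem:causalchain}), so it has at most $n_{cc}-1 \le \nmax$ hops. Thus every element of $S_{\seg_u,v}$ lies in the integer interval $[ds_{\seg_u,v}+1,\ ds_{\seg_u,v}+\maxi(n_{cc}-1)]$, which contains at most $\maxi(n_{cc}-1) \le \maxi\nmax$ integers. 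Hence $\X = \maxi\nmax$ is a valid bound for every $S_{\seg_u,v}$. Substituting $\X = \maxi\nmax$ into the bound of \Ref{Theorem}{theo:compl} gives $(\maxi\nmax\cdot\nmax^2 + 3\nmax - \maxi\nmax + 3)(n-1) = [\maxi\nmax^3 + (3-\maxi)\nmax + 3](n-1)$, which is $O(\maxi\nmax^3 n)$ since $\nmax \le n-1$.

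I do not expect a genuine obstacle here, since all the hard work is done in the preceding lemmas; the only thing requiring care is the arithmetic substitution and the justification that $S_{\seg_u,v}$ is a set of integers confined to an interval of length $\maxi(n_{cc}-1)$. The one subtle point worth double-checking is that the bound $\X=\maxi\nmax$ must hold \emph{uniformly} over all non-root $u$, all $u$-segments $\seg_u$, and all $v\in V_u$, which is precisely the hypothesis of \Ref{Theorem}{theo:compl}; this uniformity is guaranteed because $n_{cc}-1 \le \nmax$ regardless of which component or segment we consider. Everything else is a direct citation of \Ref{Theorem}{theo:compl}, \Ref{Observation}{obs:finite}, and \Ref{Corollary}{coro:legterm}.
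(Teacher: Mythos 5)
Your proposal is correct and follows essentially the same route as the paper: the paper likewise derives the theorem by noting that integer weights confine $S_{\seg_u,v}$ to the interval $[ds_{\seg_u,v}+1,\ ds_{\seg_u,v}+\maxi(n_{cc}-1)]$, hence $\X=\maxi\nmax$, and then combines Theorem~\ref{theo:compl}, Observation~\ref{obs:finite} (for finiteness of all executions, independently of integer weights), and Corollary~\ref{coro:legterm} (terminal $=$ legitimate). Your explicit split into the qualitative part and the arithmetic instantiation, and your justification of the interval containment via Observation~\ref{obs:5} and Lemma~\ref{lem:causalchain}, only make explicit what the paper leaves implicit.
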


If all edges in~$G$ have the same weight~$\weight$, then the size
of~$S_{\seg_u,v}$, where $u$ is a non-root process and $v \in V_u$, is bounded by~$\nmax$. Indeed, in such a case, we
have $S_{\seg_u,v} \subset \{ ds_{\seg_u,v} + i.\weight ~|~ 1 \leq i \leq
n_{cc}-1\}$, where $n_{cc} \leq \nmax+1$ is the number of
processes in $V_u$. Hence, we obtain the following
corollary.

\begin{corollary}
  If all edges have the same weight, then the stabilization time in
  steps of Algorithm~$\A$ is at most $(\nmax^3 +2\nmax+3)(n-1)$, which
  is less than or equal to $n^4$ for all $n \geq 2$.
\end{corollary}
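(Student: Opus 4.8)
The plan is to derive the corollary directly from Theorem~\ref{theo:compl} by computing the right value of the bound $\X$ in the uniform-weight setting. First I would fix a non-root process $u$, a $u$-segment $\seg_u$, and a process $v \in V_u$, and bound the cardinality of $S_{\seg_u,v}$. The key observation, already isolated in Observation~\ref{obs:5}, is that every action of a maximal causal chain rooted at $v$ is a call to $\compute$, and the $i$-th such action assigns the distance value $ds_{\seg_u,v} + \sum_{j=1}^{i} \weight(u_j,u_{j-1})$. When all edge weights equal a common value $\weight$, this telescopes to $ds_{\seg_u,v} + i\weight$, so every element of $S_{\seg_u,v}$ has the form $ds_{\seg_u,v} + i\weight$ for some positive integer $i$.

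Next I would bound the admissible range of $i$. By Lemma~\ref{lem:causalchain}, the processes carrying the actions of a maximal causal chain rooted at $v$ are pairwise distinct, all lie in $V_u$, and none of them is $v$ itself. Hence the length of such a chain is at most $n_{cc}-1$, where $n_{cc} \leq \nmax+1$ is the number of processes in $V_u$. Consequently $S_{\seg_u,v} \subseteq \{ds_{\seg_u,v} + i\weight \mid 1 \leq i \leq n_{cc}-1\}$, a set of size at most $n_{cc}-1 \leq \nmax$. This establishes that the hypothesis of Theorem~\ref{theo:compl} holds with $\X = \nmax$.

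Plugging $\X = \nmax$ into the bound $(\X\nmax^2 + 3\nmax - \X + 3)(n-1)$ of Theorem~\ref{theo:compl} yields $(\nmax^3 + 3\nmax - \nmax + 3)(n-1) = (\nmax^3 + 2\nmax + 3)(n-1)$, which is exactly the claimed step complexity. Finally, to obtain the clean bound $n^4$, I would use $\nmax \leq n-1$ and set $m = n-1$; the expression is then at most $(m^3 + 2m + 3)m = m^4 + 2m^2 + 3m$, while $n^4 = (m+1)^4 = m^4 + 4m^3 + 6m^2 + 4m + 1$. The difference $n^4 - (m^4 + 2m^2 + 3m) = 4m^3 + 4m^2 + m + 1$ is nonnegative for every $m \geq 0$, so the inequality $(\nmax^3 + 2\nmax + 3)(n-1) \leq n^4$ holds for all $n \geq 1$, in particular for $n \geq 2$.

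I do not expect any genuine obstacle here: the whole argument is a specialization of the already-established Theorem~\ref{theo:compl}. The only point requiring care is the counting step that yields $\X = \nmax$ rather than a weaker bound --- one must invoke the distinctness of the processes along a maximal causal chain (Lemma~\ref{lem:causalchain}) together with the collapse of the telescoping sum under the uniform-weight assumption, so that $S_{\seg_u,v}$ is confined to the arithmetic progression $ds_{\seg_u,v} + i\weight$ of length at most $n_{cc}-1$. The concluding polynomial comparison is routine.
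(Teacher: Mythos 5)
Your proof is correct and follows essentially the same route as the paper: you bound $|S_{\seg_u,v}|$ by $\nmax$ via the arithmetic progression $\{ds_{\seg_u,v} + i\weight \mid 1 \leq i \leq n_{cc}-1\}$ (justified by Observation~\ref{obs:5} and Lemma~\ref{lem:causalchain}) and then instantiate Theorem~\ref{theo:compl} with $\X = \nmax$. The only difference is that you additionally spell out the chain-length counting and the comparison with $n^4$, which the paper leaves implicit; both are verified correctly.
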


\section{Round Complexity of Algorithm \A}\label{sect:rounds}

We now prove that every execution of Algorithm \A lasts at
most~$3\nmax+D$ rounds, where~$\nmax$ is the maximum number of
non-root processes in a connected component and $D$ is the
hop-diameter of the connected component containing~$r$, $V_r$.

The first lemma essentially claims that all processes that are in
illegal branches progressively switch to status~$EB$ within~$\nmax$
rounds, in order of increasing depth.

\begin{lemma}\label{lem:propagatingEB}
  Let $i \in \mathds{N}^*$.  Starting from the beginning of round~$i$,
  there does not exist any process both in state~$C$ and at depth less
  than~$i$ in an illegal branch.
\end{lemma}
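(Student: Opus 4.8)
The plan is to proceed by induction on $i$, the round number. The statement we want is that from the start of round~$i$, no process sits at depth $< i$ in an illegal branch while having status~$C$. First I would handle the base case $i=1$ as the trivial claim that no process has depth $<1$, so the condition is vacuously true at the start of round~$1$ (equivalently, at time~$0$).

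For the inductive step, assume the claim holds at the start of round~$i$, and consider a process $u$ that lies at depth exactly~$i$ in some illegal branch $v_1,\dots,v_k$ (with $v_1$ an alive abnormal root, since the branch is illegal). I want to show that $u$ cannot still have status~$C$ at the start of round~$i+1$. By the induction hypothesis applied to the sub-branch, every process strictly above $u$ in the branch (depths $1,\dots,i-1$) already has status $\neq C$ from the start of round~$i$; by Observation~\ref{obs:size}, a process with status~$C$ forces its whole sub-branch below to have status~$C$, so the only way $u$ keeps status~$C$ is if its ancestors are \emph{not} in state~$C$. In particular I would argue that the parent $v_{i-1} = \parent_u$ has status in $\{EB,EF\}$ at the start of round~$i$ (it is not~$I$, since $u\in\enfants(\parent_u)$ forces $\status_{\parent_u}\neq I$, and it is not~$C$ by the induction hypothesis). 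The key case analysis is then: if $\status_{\parent_u}=EB$, rule $\mathbf{R_{EB}}(u)$ is enabled at $u$ (the disjunct $\status_{\parent_u}=EB$ in $\mathbf{R_{EB}}$ fires, provided $\neg\PCorrection(u)$ holds); if $\status_{\parent_u}=EF$, then $u$ is itself an abnormal root (status mismatch with an $EF$-parent that is not~$EB$), so $\mathbf{R_{EB}}(u)$ is again enabled via the $\abnormalRoot(u)$ disjunct. Either way $u$ is enabled to leave status~$C$, unless $\PCorrection(u)$ holds—but if $\PCorrection(u)$ holds then $\mathbf{R_C}(u)$ is enabled, and executing it would make $u$ point to a $C$-neighbor, contradicting that $u$ remains at depth~$i$ in an \emph{illegal} branch (a chain of $\enfants$ back to an alive abnormal root). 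So in all sub-cases $u$ is enabled in~$\gamma$, and the enabling rule moves it off status~$C$.

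The definition of round then closes the argument: during round~$i$, every process enabled at the start of the round either executes a rule or is neutralized. If $u$ executes its enabled rule it leaves status~$C$ (by Corollary~\ref{cor:language} the only way to regain status~$C$ within the same segment is through $\compute$, which attaches $u$ to a $C$-parent and hence removes it from the illegal branch). If instead $u$ is neutralized, then it is no longer enabled at the end of the round; re-running the case analysis shows the only way $u$ can cease to be enabled \emph{while retaining} status~$C$ and \emph{staying} at depth~$i$ in an illegal branch is impossible, because the enabling conditions above depend only on the ancestors' statuses, which by induction stay $\neq C$. Hence by the start of round~$i+1$ no process at depth~$\le i$ in an illegal branch has status~$C$, completing the induction.

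The main obstacle I anticipate is the \emph{neutralization} bookkeeping: a process can lose its enabledness not by acting itself but because a neighbor acts, so I must verify carefully that no neighbor's move can re-stabilize $u$ into a $C$-state at depth~$i$ of an illegal branch. The cleanest way is to show the enabling predicate for $u$ (switch out of $C$) is \emph{monotone} once the ancestors are fixed in $\{EB,EF\}$—that is, the only neighbor action that could disable $u$'s $\mathbf{R_{EB}}$ rule is one that gives $u$ a $C$-neighbor improving its distance, which triggers $\mathbf{R_C}$ instead and detaches $u$ from the illegal branch. Making this ``either it leaves $C$, or it leaves the illegal branch'' dichotomy airtight across a full round, using Observation~\ref{obs:size} to control statuses along the branch, is where the real care is needed.
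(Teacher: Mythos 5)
Your induction skeleton matches the paper's (induction on $i$, vacuous base case, forcing the parent of a depth-$i$ status-$C$ process to have status~$EB$, then the $\mathrm{R_{EB}}$/$\mathrm{R_C}$ dichotomy), but there is a genuine gap: you establish the invariant only \emph{at the start} of round~$i+1$, and only for processes that sat at depth exactly~$i$ at the start of round~$i$. The lemma --- and the inductive hypothesis exactly as you yourself use it throughout round~$i$ (``ancestors' statuses stay $\neq C$'') --- is a \emph{persistent} property: it must hold in every configuration from the beginning of round~$i+1$ onward. Closing the induction therefore requires a stability argument showing that no \emph{new} counterexample can ever be created. There are two creation modes to exclude: (a) a process acquires status~$C$ via $\compute$ and lands at depth~$\leq i$ --- excluded because its new parent must have status~$C$ and hence, by the induction hypothesis, cannot be at depth~$< i$ in an illegal branch (you gesture at this, but only for $u$ itself and without invoking the hypothesis on the new parent); and (b) a process that already has status~$C$ at depth~$> i$ sees its depth \emph{decrease} to~$\leq i$ because an ancestor acts. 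Mode~(b) is entirely absent from your proof, and it is exactly where the paper's argument invokes the freezing property: an ancestor at depth~$< i$ has status~$EB$ and a child in $\enfants$ whose status is not~$EF$, so \emph{no rule at all} is enabled at it; hence the top of an illegal branch cannot move and depths cannot shrink.

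That freezing property is also what your neutralization discussion actually needs, and what it gets wrong. Your claim that $u$ cannot be neutralized ``because the enabling conditions depend only on the ancestors' statuses, which by induction stay $\neq C$'' is circular: the induction hypothesis only says a process has status~$\neq C$ \emph{while it sits at depth~$< i$ in an illegal branch}; it does not by itself prevent the parent from acting, leaving the branch, and later exhibiting status~$C$. That is ruled out only by observing that an $EB$ process with a non-$EF$ child in $\enfants$ is permanently disabled ($\mathrm{R_{EF}}$ is blocked by that child, and every other rule requires status $C$, $EF$ or $I$). Moreover, the impossibility claim is simply false when $u$ is itself an alive abnormal root (the depth-$1$ case, e.g., $i=1$): there $\parent_u$ may have status~$I$ or incoherent distance, and if it executes $\mathrm{R_R}$ or $\mathrm{R_C}$ and becomes a coherent status-$C$ parent of~$u$ (with $\dist_u = \dist_{\parent_u} + \omega(u,\parent_u)$ and $\neg\PCorrection(u)$), then $u$ is neutralized without executing anything. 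The lemma survives --- $u$ then lies at depth~$\geq 2$ below a status-$C$ parent and, by Lemma~\ref{lem:pseudoRoots}, can never again become an alive abnormal root --- but your argument as written does not. (A minor further imprecision: for $i \geq 2$ the sub-case $\status_{\parent_u} = EF$ cannot occur, since $u \in \enfants(\parent_u)$ with $\status_u = C$ forces $\status_{\parent_u} \in \{C, EB\}$.)
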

\begin{proof}
  We prove this lemma by induction on~$i$. The base case~($i=1$) is
  vacuum, so we assume that the lemma holds for some integer~$i\geq
  1$.  From the beginning of round~$i$, no process can ever choose a
  parent which is at depth smaller than~$i$ in an illegal branch
  because those processes will never have status~$C$, by induction
  hypothesis.  Moreover, no process with status~$C$ can have its depth
  decreasing to $i$ or smaller by an action of one of its ancestors at
  depth smaller than $i$, because these processes have status~$EB$ and
  have at least one child not having status~$EF$. Thus, they cannot
  execute any rule. Therefore, no process can take state~$C$ at depth
  smaller or equal to~$i$ in an illegal branch.

  Consider any process~$u$ with status~$C$ at depth~$i$ in an illegal
  branch at the beginning of the round~$i$. $u \neq r$. Moreover, by
  induction hypothesis,~$u$ is an abnormal root, or the parent of~$u$
  is not in state~$C$ ({\em i.e.}, it is in the state~$EB$).  During
  round~$i$,~$u$ will execute rule~$\mathrm{R_{EB}}$
  or~$\mathrm{R_{C}}$ and thus either switch to state~$EB$ or join
  another branch at a depth greater than $i$.  This concludes the
  proof of the lemma.
\end{proof}

\begin{corollary}\label{cor:propagatingEB}
  After at most $\nmax$ rounds, the system is in a configuration from
  which no process in any illegal branch has status~$C$ forever.

  Moreover, once such a configuration is reached, each time a process
  executes a rule other than~$\mathrm{R_{EF}}$, this process is
  outside any illegal branch forever.
\end{corollary}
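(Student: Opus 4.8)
The first assertion follows directly from Lemma~\ref{lem:propagatingEB}. I would instantiate it at $i=\nmax+1$: from the beginning of round~$\nmax+1$, no process is at the same time in state~$C$ and at depth less than~$\nmax+1$ in an illegal branch. By Observation~\ref{obs:size} every branch has depth at most~$\nmax$, so every process of an illegal branch sits at depth at most~$\nmax<\nmax+1$. Hence, from the beginning of round~$\nmax+1$ --- reached after at most~$\nmax$ rounds --- and at every later configuration, no process of an illegal branch has status~$C$. I will refer to this permanent property as the invariant~$\mathcal{I}$.

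For the second assertion I would first record the structural consequences of~$\mathcal{I}$. A process of an illegal branch cannot have status~$I$ (Observation~\ref{obs:size}), hence under~$\mathcal{I}$ it has status~$EB$ or~$EF$; and since a process~$v_{j+1}\in\enfants(v_j)$ meets none of the four disjuncts of~$\abnormalRoot(v_{j+1})$ (its parent is a neighbour, has status~$\neq I$, its distance is large enough, and the status pair is compatible), only the root of a branch can be an abnormal root, so no abnormal root has status~$C$ once~$\mathcal{I}$ holds. The heart of the proof is then the \emph{monotonicity} claim: in the post-regime the set of processes lying in some illegal branch never grows from one configuration to the next. Granting this, the result is immediate. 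Indeed, a process~$u$ firing a rule other than~$\mathrm{R_{EF}}$ ends the step with status~$C$ (rules~$\mathrm{R_C}$,~$\mathrm{R_R}$) or~$I$ (rule~$\mathrm{R_I}$), and is therefore outside every illegal branch in the resulting configuration by~$\mathcal{I}$ or Observation~\ref{obs:size}; in the remaining case~$\mathrm{R_{EB}}$, $u$ had status~$C$ just before the step, hence was already outside every illegal branch, so by monotonicity it is still outside just after. Monotonicity then propagates ``outside'' to all subsequent configurations, which gives the ``forever''.

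It remains to prove monotonicity, which I expect to be the main difficulty. I would fix a post-regime step~$\gamma\mapsto\gamma'$ and an illegal branch~$w=z_1,\dots,z_m$ present in~$\gamma'$, and show by induction along the branch that each~$z_j$ already belongs to some illegal branch in~$\gamma$. For the root~$w$: if~$w$ were a freshly created abnormal root in~$\gamma'$, it could not be alive by Lemma~\ref{lem:pseudoRoots}, and it could not be dead either, since acquiring status~$EF$ forces~$w$ to fire~$\mathrm{R_{EF}}$ from status~$EB$, and a status-$EB$ abnormal root is an alive abnormal root already rooting its branch in~$\gamma$; a short case analysis --- using that the parent of a non-abnormal-root $EB$ process has status~$EB$ and cannot change it during the step --- rules out that~$w$ becomes an abnormal root only in~$\gamma'$. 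For the inductive step, a new edge~$z_{j+1}\in\enfants(z_j)$ toward an illegal-branch member cannot be created by~$\compute$, because~$\compute$ makes its executor point to a status-$C$ neighbour whereas~$z_j$ has status~$EB$ or~$EF$; the only other way the edge could appear is a status change turning the compatibility clause of~$\enfants$ true, which I would exclude using~$\mathcal{I}$ and the fact that statuses evolve only along~$C\to EB\to EF$. The delicate bookkeeping is precisely this last point: verifying that, when only statuses change during a step, no~$\enfants$-relation pointing into an illegal branch is newly created.
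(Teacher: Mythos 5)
Your proof of the first assertion is correct and is exactly the intended argument: Lemma~\ref{lem:propagatingEB} instantiated at $i=\nmax+1$, combined with the depth bound of Observation~\ref{obs:size}.

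The second assertion is where your proposal breaks. Your monotonicity induction rests on two impossibility claims that are false for this algorithm. In the root case you assert that a \emph{dead} abnormal root cannot be freshly created; but Lemma~\ref{lem:pseudoRoots} forbids only the creation of \emph{alive} abnormal roots, and dead ones are created routinely --- this is precisely how dead trees are deleted top-down. When a dead abnormal root $p$ (status $EF$) executes $\mathrm{R_I}$ or $\mathrm{R_R}$, every status-$EF$ child of $p$ becomes an abnormal root of status $EF$ in the next configuration (in Figure~\ref{fig:ex}, $v_3$ becomes a dead abnormal root exactly when $v_2$ resets in step h)$\mapsto$i)), and this happens squarely inside the regime you consider, namely during the rounds covered by Lemma~\ref{lem:propagatingI}. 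Your case analysis only covers processes that \emph{acquire} status $EF$ during the step in question, and misses those that already had status $EF$ and become abnormal roots through their parent's action. The inductive step has the same defect: a new relation $z_{j+1}\in\enfants(z_j)$ \emph{can} be created by a pure status change, e.g.\ when an alive abnormal root $z_{j+1}$ of status $EB$, whose only incoherence with its parent $z_j$ (status $EF$) is the status clause, executes $\mathrm{R_{EF}}$; afterwards both have status $EF$ and the compatibility clause of $\enfants$ becomes true. Monotonicity itself is a true statement, but the correct proof must show that whatever lies in an illegal branch of $\gamma'$ already lay in \emph{some} illegal branch of $\gamma$ (either as a non-root member, or as an abnormal root heading its own branch) --- not that nothing new can appear, which is what your sketch attempts and what fails.

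Moreover, the entire monotonicity machinery is unnecessary; the corollary admits a much shorter justification, which is presumably why the paper states it without proof. Once the invariant $\mathcal{I}$ holds, rule $\mathrm{R_{EB}}$ is never enabled at any process: its guard requires $\status_u=C$ together with $\abnormalRoot(u)$ or $\status_{\parent_u}=EB$; in the first case $u$ is a status-$C$ process rooting its own illegal branch, and in the second case $u\in\enfants(\parent_u)$ while $\parent_u$ --- a non-root, non-isolated process of status $EB$ --- lies in an illegal branch (it belongs to some branch because distances strictly decrease along parent chains, and legal branches contain only status-$C$ processes), so $u$ is a status-$C$ member of an illegal branch; either way $\mathcal{I}$ is contradicted. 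Consequently, a process executing any rule other than $\mathrm{R_{EF}}$ actually executes $\mathrm{R_C}$, $\mathrm{R_R}$, or $\mathrm{R_I}$ and finishes the step with status $C$ or $I$; since $\mathrm{R_{EB}}$ can never fire afterwards, its status remains in $\{C,I\}$ at every later configuration, where $\mathcal{I}$ (for status $C$) or Observation~\ref{obs:size} (for status $I$) places it outside every illegal branch, forever.
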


The next lemma essentially claims that, once no process in an illegal
branch has status~$C$ forever, processes in illegal branches
progressively switch to status~$EF$ within~at most $\nmax$ rounds, in
order of decreasing depth.

\begin{lemma}\label{lem:propagatingEF}
  Let $i \in \mathds{N}^*$. Starting from the beginning of
  round~$\nmax+i$, there does not exist any process at depth larger
  than~$\nmax-i+1$ in an illegal branch having the status~$EB$.
\end{lemma}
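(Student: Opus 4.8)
The plan is to prove this by induction on $i$, mirroring Lemma \ref{lem:propagatingEB} but following the $EF$-wave as it convergecasts from the leaves toward the abnormal roots, that is, in order of \emph{decreasing} depth. Throughout, I would work from round $\nmax+1$ onward, where Corollary \ref{cor:propagatingEB} guarantees that no process in an illegal branch ever has status $C$ again, and that any process executing a rule other than $\mathrm{R_{EF}}$ leaves every illegal branch forever. Combined with Observation \ref{obs:size} (an isolated process belongs to no branch), this means every process sitting in an illegal branch has status $EB$ or $EF$, with the $EF$ processes forming the deep part of the branch and the $EB$ processes its shallow part; in particular, by the contrapositive of Observation \ref{obs:size}, every ancestor of an $EB$ process in an illegal branch is itself $EB$. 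The base case $i=1$ is then vacuous: by Observation \ref{obs:size} no process lies at depth larger than $\nmax$, so none can be at depth larger than $\nmax-1+1=\nmax$.

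For the inductive step I would first record two rigidity facts about $EB$ processes. Inspecting the guards, a process with status $EB$ can execute only rule $\mathrm{R_{EF}}$; and since $\compute$ always selects a neighbor with status $C$, no process may ever point to an $EB$ process, so an $EB$ node acquires no new child. Consequently, a process carrying status $EB$ inside an illegal branch during rounds $\geq\nmax+1$ must have held it since the start of round $\nmax+1$ without executing anything, because acquiring $EB$ needs $\mathrm{R_{EB}}$, a non-$\mathrm{R_{EF}}$ rule, which by Corollary \ref{cor:propagatingEB} would eject it from all illegal branches. Its parent pointer is therefore frozen; and since all its ancestors are $EB$ with an $EB$ child and hence cannot fire $\mathrm{R_{EF}}$, the whole ancestor chain is frozen too, so its depth cannot grow during the round.

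It then remains to clear the boundary. Let $u$ be a process at depth $\nmax-i+1$ in an illegal branch with status $EB$ at the beginning of round $\nmax+i$. By the induction hypothesis no process deeper than $\nmax-i+1$ in an illegal branch is $EB$, so every process of $\enfants(u)$ has status $EF$ (it cannot be $C$ by Corollary \ref{cor:propagatingEB}, nor $I$ by Observation \ref{obs:size}), whence $\mathrm{R_{EF}}(u)$ is enabled. Since $u$ gains no new child and keeps status $EB$ until it acts, $\mathrm{R_{EF}}(u)$ stays enabled, so $u$ cannot be neutralized without acting; by the definition of a round, $u$ therefore executes its only available rule $\mathrm{R_{EF}}$ and becomes $EF$ before the round ends. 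Together with the previous paragraph (no new $EB$ enters an illegal branch, and the surviving $EB$ processes, all at depth at most $\nmax-i$, do not sink deeper), this shows that at the start of round $\nmax+i+1$ no process at depth larger than $\nmax-i=\nmax-(i+1)+1$ in an illegal branch has status $EB$, which is exactly the claim for $i+1$.

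The main obstacle I expect is the depth bookkeeping: one must be certain that during the convergecast a surviving $EB$ process cannot be pushed deeper than the induction hypothesis allows, and that the boundary process $u$ genuinely fires rather than being silently disabled. Both points rest on the two rigidity facts — an $EB$ process can only run $\mathrm{R_{EF}}$, and no one can adopt an $EB$ node as parent — which freeze the parent pointers of $u$ and of its whole ancestor chain for the entire round, so the only admissible evolution is the one-layer retreat of the $EB$ front.
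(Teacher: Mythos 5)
Your proof is correct and follows essentially the same route as the paper's: induction on~$i$ with the vacuous base case from Observation~\ref{obs:size}, the key step that a status-$EB$ process at depth~$\nmax-i+1$ has all its $\enfants$ with status~$EF$ (by the induction hypothesis, Corollary~\ref{cor:propagatingEB}, and the definition of branches), cannot be neutralized since no process can adopt an~$EB$ parent, and hence fires~$\mathrm{R_{EF}}$ within round~$\nmax+i$, together with Corollary~\ref{cor:propagatingEB} to rule out new~$EB$ processes appearing in illegal branches. Your added ``rigidity facts'' (an $EB$ process can only execute $\mathrm{R_{EF}}$, and frozen ancestor chains keep depths stable) simply make explicit some bookkeeping the paper leaves terse.
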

\begin{proof}
  We prove this lemma by induction on~$i$. The base case~($i=1$) is
  vacuum (by Observation~\ref{obs:size}), so we assume that the lemma
  holds for some integer~$i\geq 1$.  At the beginning of
  round~$\nmax+i$, no process at depth larger than~$\nmax-i+1$ has the
  status~$EB$ (by induction hypothesis) or status~$C$ (by
  Corollary~\ref{cor:propagatingEB}). Therefore, processes with
  status~$EB$ at depth~$\nmax-i+1$ in an illegal branch can execute
  the rule~$\mathrm{R_{EF}}$ at the beginning of round~$\nmax+i$.
  These processes will thus all execute within round~$\nmax+i$ (they
  cannot be neutralized as no children can connect to them). We
  conclude the proof by noticing that, from
  Corollary~\ref{cor:propagatingEB}, once round~$\nmax$ has
  terminated, any process in an illegal branch that executes either
  gets status~$EF$, or will be outside any illegal branch forever.
\end{proof}

The next lemma essentially claims that, after the propagation of
status $EF$ in illegal branches, the maximum length of illegal
branches progressively decreases until all illegal branches vanish.

\begin{lemma}\label{lem:propagatingI}
  Let $i \in \mathds{N}^*$.  Starting from the beginning of
  round~$2\nmax+i$, there does not exist any process at depth larger
  than~$\nmax-i+1$ in an illegal branch.
\end{lemma}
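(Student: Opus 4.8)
The plan is to prove the lemma by induction on~$i$, mirroring the structure of Lemmas~\ref{lem:propagatingEB} and~\ref{lem:propagatingEF}. The base case~($i=1$) is vacuous, since by Observation~\ref{obs:size} no branch has depth exceeding~$\nmax$. For the inductive step I assume the statement for~$i$ and establish it for~$i+1$; concretely, I must show that during round~$2\nmax+i$ (which, as $i\geq 1$, is a round of index at least~$2\nmax+1$) the maximum depth of illegal branches drops from~$\nmax-i+1$ to~$\nmax-i$. The first fact I would record is that, from the beginning of round~$2\nmax+1$ onwards, every process lying in an illegal branch has status~$EF$: by Corollary~\ref{cor:propagatingEB} none of them keeps status~$C$ forever, and applying Lemma~\ref{lem:propagatingEF} with index~$\nmax+1$ (for which the depth threshold~$\nmax-i+1$ becomes~$0$, covering all depths from round~$2\nmax+1$) shows that none of them has status~$EB$ either. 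Moreover, by the second part of Corollary~\ref{cor:propagatingEB}, no process ever (re)joins an illegal branch after round~$\nmax$, so illegal branches can only shrink or be re-rooted.

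Next I would describe the removal mechanism acting on the \emph{top} of each illegal branch. Let~$u$ be a process at depth~$1$ of an illegal branch at the beginning of round~$2\nmax+i$; thus~$u$ is an abnormal root with~$\status_u=EF$, so~$\PReset(u)$ holds and exactly one of~$\mathrm{R_I}$ (if~$u$ has no neighbor with status~$C$) or~$\mathrm{R_R}$ (otherwise) is enabled at~$u$. Granting for the moment that~$u$ cannot be neutralized, it executes one of these rules within round~$2\nmax+i$ and thereby leaves the illegal branch, taking status~$I$ or~$C$. I would then check that its departure re-roots the branch one level down: if~$v$ is a child of~$u$ in the branch (so~$\parent_v=u$ and~$\status_v=EF$), then after~$u$'s move~$v$ satisfies~$\abnormalRoot(v)$, either because~$\status_u$ became~$I$, or because~$\status_u$ became~$C$, which differs from~$EF$ and from~$EB$. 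Hence~$v$ becomes a fresh depth-$1$ abnormal root, every process formerly at depth~$k$ in this branch is now at depth~$k-1$, and the maximum illegal-branch depth, at most~$\nmax-i+1$ by the induction hypothesis, is at most~$\nmax-i$ at the beginning of round~$2\nmax+i+1$, as required. (Any intermediate process that additionally becomes enabled and executes within the same round only decreases depths further, preserving the bound.)

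The main obstacle is exactly the non-neutralization claim: a depth-$1$ abnormal root~$u$ with status~$EF$ must stay enabled until it executes. Since~$u$ keeps status~$EF$ unless it executes, it suffices to show that no neighbor action can falsify~$\abnormalRoot(u)$. For~$u$ of status~$EF$, $\abnormalRoot(u)$ could become false only if~$u$ turned into a genuine child of~$\parent_u$, which by the definition of~$\enfants$ would require~$\dist_u\geq \dist_{\parent_u}+\omega(u,\parent_u)$ together with~$\status_{\parent_u}=EF$ or~$\status_{\parent_u}=EB$; but either placement would put~$u$ at depth at least~$2$ in~$\parent_u$'s branch, contradicting that~$u$ is at depth~$1$. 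The status~$EB$ possibility is ruled out by the first paragraph (illegal branches carry only~$EF$, and legal branches carry only~$C$ by Observation~\ref{obs:size}, so no process carries~$EB$ from round~$2\nmax+1$), and an~$EF$ parent cannot repair the distance inconsistency without leaving status~$EF$, since an~$EF$ process only executes~$\mathrm{R_I}$ or~$\mathrm{R_R}$, both of which change its status. A short case analysis on the reason~$u$ is an abnormal root (broken pointer, parent of status~$I$, distance inconsistency, or status inconsistency) then confirms that every admissible action of~$\parent_u$ leaves~$u$ an abnormal root. Hence~$u$ is never neutralized and indeed executes within round~$2\nmax+i$, completing the induction.
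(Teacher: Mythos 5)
Your proof is correct and follows essentially the same route as the paper's: induction on~$i$ with a vacuous base case (Observation~\ref{obs:size}), the fact that from round~$2\nmax+1$ every process in an illegal branch has status~$EF$ (combining Corollary~\ref{cor:propagatingEB} with Lemma~\ref{lem:propagatingEF}), so that each dead abnormal root satisfies~$\PReset$ and executes~$\mathrm{R_I}$ or~$\mathrm{R_R}$ within the round, while Corollary~\ref{cor:propagatingEB} prevents any process from (re)joining an illegal branch, hence the maximal depth of illegal branches decreases by one per round. The only difference is that you explicitly verify that a dead abnormal root cannot be neutralized (the paper leaves this implicit); this added care is sound --- your direct case analysis ruling out an $EB$ or $EF$ parent does the work, even though the earlier ``contradicting that $u$ is at depth~$1$'' phrasing is not by itself an argument, since depths may change within a step.
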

\begin{proof}
  We prove this lemma by induction on~$i$. The base case~($i=1$) is
  vacuum (by Observation~\ref{obs:size}), so we assume that the lemma
  holds for some integer~$i\geq 1$. By induction hypothesis, at the
  beginning of round~$2\nmax+i$, no process is at depth larger than or
  equal to~$\nmax-i+1$ in an illegal branch.  All processes in an
  illegal branch have the status~$EF$. So, at the beginning of
  round~$2\nmax+i$, any abnormal root satisfies the
  predicate~$\PReset$, they are enabled to execute
  either~$\mathrm{R_I}$, or~$\mathrm{R_R}$.
  So, all abnormal roots at the beginning of the round~$2\nmax+i$ are
  no more in an illegal branch at the end of this round: the maximal
  depth of the illegal branches has decreased, since by
  Corollary~\ref{cor:propagatingEB}, no process can join an illegal
  tree during the round~$2\nmax+i$.
\end{proof}

\begin{corollary}\label{cor:propagatingI}
After at most round~$3\nmax$, there are no illegal branches forever.
\end{corollary}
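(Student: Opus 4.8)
The plan is to derive the corollary from Lemma~\ref{lem:propagatingI} by pushing its induction parameter to its extreme value and then treating the very last round by hand. First I would instantiate Lemma~\ref{lem:propagatingI} with $i = \nmax$: this yields that, from the beginning of round~$3\nmax$ (i.e.\ round~$2\nmax+\nmax$), there is no process at depth larger than $\nmax-\nmax+1 = 1$ in an illegal branch. Consequently, at the start of round~$3\nmax$ every illegal branch has depth exactly~$1$, that is, it is reduced to its abnormal root alone.

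Next I would show that these remaining depth-$1$ illegal branches all vanish within round~$3\nmax$, reusing the mechanism of the inductive step of Lemma~\ref{lem:propagatingI}. At the beginning of round~$3\nmax$, every process still lying in an illegal branch has status~$EF$ (this is precisely the invariant exploited in the proof of Lemma~\ref{lem:propagatingI}, obtained from Corollary~\ref{cor:propagatingEB} together with the completed $EB$-to-$EF$ propagation established in Lemma~\ref{lem:propagatingEF}). Hence each such abnormal root~$u$ satisfies $\PReset(u)$ and is enabled to execute $\mathrm{R_I}$ or $\mathrm{R_R}$. By the \emph{moreover} part of Corollary~\ref{cor:propagatingEB}, no process can join an illegal branch once round~$\nmax$ has elapsed, so none of these abnormal roots can be neutralized without acting; each therefore executes $\mathrm{R_I}$ or $\mathrm{R_R}$ during round~$3\nmax$ and thereby leaves its illegal branch. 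Thus, by the end of round~$3\nmax$, no illegal branch remains.

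Finally I would invoke the ``forever'' guarantee once more: by the second part of Corollary~\ref{cor:propagatingEB}, every process that executes a rule other than $\mathrm{R_{EF}}$ after round~$\nmax$ lies outside all illegal branches forever, and no process ever re-enters an illegal branch. Since all illegal branches have been emptied by the end of round~$3\nmax$, none can be recreated afterwards, which establishes the claim.

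The main obstacle I anticipate is the last-round bookkeeping: I must ensure that the surviving abnormal roots are genuinely reset \emph{within} round~$3\nmax$, rather than merely \emph{reaching} status~$EF$ during it (which would cost an extra round and break the $3\nmax$ bound). This is why it is essential to confirm that, at the very start of round~$3\nmax$, these abnormal roots already carry status~$EF$ (so that $\PReset$ holds immediately and $\mathrm{R_I}$/$\mathrm{R_R}$ is enabled at once) and that Corollary~\ref{cor:propagatingEB} forbids any new connection to them, guaranteeing that they act---rather than being neutralized---before the round completes.
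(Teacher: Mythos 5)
Your proof is correct, but it does more work than the statement requires: Corollary~\ref{cor:propagatingI} is a direct instantiation of Lemma~\ref{lem:propagatingI}. The lemma holds for every $i \in \mathds{N}^*$, so you may take $i=\nmax+1$: from the beginning of round $2\nmax+(\nmax+1)=3\nmax+1$, that is, from the end of round $3\nmax$ onwards and forever, no process is at depth larger than $\nmax-(\nmax+1)+1=0$ in an illegal branch. Since every member of a branch has depth at least $1$, this literally says that no illegal branch exists from that moment on, which is the corollary. Your proposal instead stops at $i=\nmax$ and then redoes by hand exactly the inductive step of Lemma~\ref{lem:propagatingI} for the final round: the surviving depth-$1$ dead abnormal roots satisfy $\PReset$, are enabled for $\mathrm{R_I}$ or $\mathrm{R_R}$, cannot be neutralized, hence act and leave all illegal branches forever. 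That argument is sound, and your concern about processes merely \emph{reaching} status $EF$ during the round (which would cost an extra round) is the right thing to check, but it buys nothing over the direct instantiation. Two smaller remarks. First, to know that the depth-$1$ roots already carry status $EF$ (and not $EB$) at the start of round $3\nmax$, you implicitly need Lemma~\ref{lem:propagatingEF} at $i=\nmax+1$, i.e., pushed one step past the point where its depth bound is non-trivial --- exactly the move you declined to make for Lemma~\ref{lem:propagatingI}, and which would have ended the proof at once. Second, your non-neutralization step cites the ``moreover'' part of Corollary~\ref{cor:propagatingEB}, which only concerns processes that \emph{execute} rules; the precise reason a dead abnormal root $u$ cannot be neutralized is that $\abnormalRoot(u)$ could only become false through $u$'s parent acquiring status $EB$ or $EF$ with a compatible distance, which is impossible at this stage of the execution (no $\mathrm{R_{EB}}$ can fire after round $\nmax$, and no status-$EB$ process remains to feed $\mathrm{R_{EF}}$).
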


Note that in any connected component that does not contain the
root~$r$, there is no legal branch. Then, since the only way for a
process to be in no branch is to have status~$I$, we obtain the
following corollary.

\begin{corollary}\label{cor:convergeI}
  For any connected component~$H$ other than $V_r$, after at
  most~$3\nmax$ rounds, every process of~$H$ is in a legitimate state
  forever.
\end{corollary}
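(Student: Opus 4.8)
The plan is to obtain this corollary almost for free by combining Corollary~\ref{cor:propagatingI} with the structural link between branch membership and the status variable recorded in Observation~\ref{obs:size}. First I would fix an arbitrary connected component $H \neq V_r$ and observe that, since the root $r$ does not belong to $H$, no process of $H$ can lie on a \emph{legal} branch: by definition a legal branch starts at $r$ and all its processes belong to $V_r$. Hence the only branches that could possibly contain processes of $H$ are illegal ones.

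Next I would invoke Corollary~\ref{cor:propagatingI}, which guarantees that after at most round~$3\nmax$ there are no illegal branches, and that this holds forever from that point on. Combining this with the previous remark, I conclude that once round~$3\nmax$ has elapsed, no process of $H$ belongs to \emph{any} branch (legal or illegal), and that this property is permanent.

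The final step is to translate ``belonging to no branch'' into the status $I$. Here I would use the fact, stated just before the corollary, that the only way for a process to lie in no branch is to have status~$I$. Concretely, any process $u$ with $\status_u \neq I$ is either an abnormal root---and thus the first process of a branch---or a genuine child (so $u \in \enfants(\parent_u)$), in which case following the chain of parent pointers, which is acyclic and finite, eventually reaches either $r$ or an abnormal root and therefore places $u$ on a branch. Consequently, from round~$3\nmax$ onward every process of $H$ permanently has status~$I$. Since a process outside $V_r$ with status~$I$ satisfies the third condition of Definition~\ref{def:leg:state}, every process of $H$ is then in a legitimate state forever, which is exactly the claim.

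As for difficulty, this statement is essentially a repackaging of results already established, so I do not expect a genuine obstacle. The only point requiring a moment of care is the contrapositive direction of the branch/status correspondence---arguing that a non-isolated process necessarily sits on some branch---but this follows at once from the acyclicity and finiteness of parent chains, so no real difficulty remains.
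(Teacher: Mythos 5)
Your proposal is correct and matches the paper's own (very brief) argument: the paper derives this corollary directly from Corollary~\ref{cor:propagatingI} together with the remarks that a component without~$r$ has no legal branch and that any non-isolated process must lie on some branch. Your explicit parent-chain argument for that last fact is exactly the justification the paper leaves implicit, so there is nothing to add.
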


In the connected component~$V_r$, Algorithm~\A may need additional
rounds to propagate the correct distances to~$r$. 
In the next lemma, we use the notion of hop-distance to~$r$ 
defined below.

\begin{definition}[Hop-Distance and Hop-Diameter]
  A process~$u$ is said to be at \emph{hop-distance}~$k$ from~$v$ if
  the minimum number of edges in a shortest path from $u$ to $v$ is~$k$. 

  The \emph{hop-diameter} of a graph $G$ (resp. of a connected
  component $H$ of the graph $G$) is the maximum hop-distance between
  any two nodes of $G$ (resp. of $H$).
\end{definition}

\begin{lemma}\label{lem:lastDrounds} 
  Let~$i \in \mathds{N}$. In every execution of Algorithm~\A, starting
  from the beginning of round~$3\nmax+i$, every process at hop-distance
  at most~$i$ from~$r$ is in a legitimate state.
\end{lemma}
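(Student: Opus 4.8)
The plan is to prove Lemma~\ref{lem:lastDrounds} by induction on $i$. The lemma states that from the beginning of round~$3\nmax+i$, every process at hop-distance at most $i$ from $r$ is in a legitimate state. I need to understand what "legitimate state" means for a process in $V_r$: by Definition~\ref{def:leg:state}, a non-root process $u \in V_r$ is legitimate exactly when $\status_u = C$, $\dist_u = d(u,r)$, and $\dist_u = \dist_{\parent_u} + \omega(u,\parent_u)$. So the goal is really to show that correct distances propagate outward from $r$ in hop-distance order, one hop per round, once all illegal branches have vanished.

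Let me think about the structure. First, the base case $i=0$: the only process at hop-distance $0$ from $r$ is $r$ itself, and $r$ is legitimate in every configuration by point~(1) of Definition~\ref{def:leg:state} (its variables are constants $\status_r = C$, $\dist_r = 0$). So the base case is immediate. For the inductive step, I would assume the lemma holds for some $i$ and prove it for $i+1$: starting from round $3\nmax + (i+1)$, every process at hop-distance at most $i+1$ is legitimate. By the induction hypothesis applied at round $3\nmax+i$, every process at hop-distance $\le i$ is already legitimate at the beginning of round $3\nmax+i$, and I would first argue these processes remain legitimate (they are disabled, since a legitimate configuration is terminal by Corollary~\ref{coro:legterm} — more precisely, a legitimate process whose relevant neighbors are also stabilized has no enabled rule). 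The real work is to handle a process $u$ at hop-distance exactly $i+1$ during round $3\nmax+i$.

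The key observation I would exploit is Corollary~\ref{cor:propagatingI}: after round $3\nmax$ there are no illegal branches forever, so every process in $V_r$ that is not isolated lies in the legal branch (the tree rooted at $r$), and by Corollary~\ref{cor:propagatingEB} any process executing a rule other than $\mathrm{R_{EF}}$ leaves illegal branches permanently. Thus from round $3\nmax$ onward, within $V_r$ no $EB$ or $EF$ status can appear in any branch, so effectively only the "normal" rules $\mathrm{R_R}$ and $\mathrm{R_C}$ (and becoming $C$) are in play inside $V_r$. For a process $u$ at hop-distance $i+1$, it has a neighbor $v$ at hop-distance $i$ with $d(u,r) = d(v,r) + \omega(u,v)$; by the induction hypothesis, at the start of round $3\nmax+i$ this $v$ already satisfies $\status_v = C$ and $\dist_v = d(v,r)$, and it stays that way. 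During round $3\nmax+i$, if $u$ does not already have the correct distance, then either $u$ has status $I$ and $\mathrm{R_R}$ is enabled, or $u$ has status $C$ with $\dist_u > d(u,r)$ and $\PCorrection(u)$ holds (because $\dist_v + \omega(u,v) = d(u,r) < \dist_u$), so $\mathrm{R_C}$ is enabled; in either case $u$ is enabled and must execute or be neutralized within the round. I would argue that when $u$ executes $\compute$ it selects the parent minimizing $\dist_w + \omega(u,w)$ over $C$-neighbors, and since all neighbors closer to $r$ are already stabilized with exact distances and no neighbor can offer a smaller value than $d(u,r)$ (that would contradict the definition of shortest-path distance together with the fact that every stabilized neighbor $w$ has $\dist_w \ge d(w,r)$), it sets $\dist_u = d(u,r)$ correctly.

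The main obstacle I anticipate is the neutralization subtlety: the definition of round only guarantees that a process enabled at the start of the round either executes or is \emph{neutralized} by the end, not that it executes. So I must rule out that $u$ becomes disabled without acquiring the correct distance — that is, I must show that any change to $u$'s neighbors during the round cannot disable $\mathrm{R_C}$ (or $\mathrm{R_R}$) at $u$ while leaving $\dist_u$ wrong. Here I would lean on the fact that the stabilized neighbor $v$ at hop-distance $i$ is frozen with $\status_v = C$ and $\dist_v = d(v,r)$ throughout the round, so the witness making $\PCorrection(u)$ true (or the $C$-neighbor enabling $\mathrm{R_R}$) persists for the whole round; hence $u$ cannot be neutralized without executing, and once it executes $\compute$ it obtains exactly $d(u,r)$ and becomes legitimate. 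A second point to verify is that once $u$ is legitimate it can no longer be disturbed in later rounds, which again follows because all processes at hop-distance $\le i+1$ are now legitimate and hence disabled, so the configuration restricted to this ball is stable. Combining the base case with this inductive step yields the lemma, and it is worth noting that instantiating $i = D$ together with Corollary~\ref{cor:convergeI} gives the announced $3\nmax + D$ round bound.
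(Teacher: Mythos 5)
Your overall route is the same as the paper's: induction on hop-distance, a trivial base case at $r$, the cleanup of illegal branches after $3\nmax$ rounds via Corollary~\ref{cor:propagatingI}, and a careful treatment of neutralization. However, there is a genuine gap: your argument needs, but never establishes, the invariant that after round~$3\nmax$ \emph{every} process of $V_r$ with status~$C$ stores a distance value at least its true weighted distance to~$r$. You invoke the lower bound $\dist_w \geq d(w,r)$ only for \emph{stabilized} neighbors (those covered by the induction hypothesis), and that weaker fact is not enough. The macro $\compute(u)$ takes the argmin over \emph{all} neighbors of $u$ with status~$C$, including neighbors at hop-distance greater than~$i$ that are not yet legitimate; nothing you state forbids such a neighbor $w$ from carrying an under-estimated value $\dist_w < d(w,r)$ left over from the arbitrary initialization, in which case $u$ would adopt $\dist_u = \dist_w + \omega(u,w) < d(u,r)$ and end the round illegitimate, breaking the induction. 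The same missing invariant is silently used in two further places: in your case analysis, where you assume a non-legitimate $C$-process at hop-distance $i+1$ satisfies $\dist_u > d(u,r)$ (if instead $\dist_u < d(u,r)$ held, the process could be disabled with no rule ever able to raise its distance, and the round argument would fail for it); and in your closing claim that legitimate processes are disabled (a legitimate process is enabled for $\mathrm{R_C}$ precisely when some $C$-neighbor offers a strictly smaller value, which only the invariant rules out).

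The fix is short, and it is exactly the step the paper inserts before its induction: by Corollary~\ref{cor:propagatingI}, after $3\nmax$ rounds every non-isolated process of $V_r$ lies in a legal branch, and along a legal branch $r = v_1, \ldots, v_k$ the definition of $\enfants$ forces $\dist_{v_{j+1}} \geq \dist_{v_j} + \omega(v_{j+1},v_j)$ at every step; hence every such process stores a distance at least the weight of its branch back to $r$, thus at least its true distance $d(\cdot,r)$. You already have the ingredient (legal-branch membership) but never extract the distance bound from it, and this deduction is the crux of the whole analysis, since it is what guarantees that no under-estimate can survive the first $3\nmax$ rounds. With this global lower bound in hand, your case analysis, your argmin computation, your non-neutralization argument, and your stability claim all go through; your round indexing (working inside round $3\nmax+i$ so as to conclude at its end, i.e., at the beginning of round $3\nmax+i+1$) is, if anything, more faithful to the statement than the paper's.
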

\begin{proof}
  We prove this lemma by induction on $i$.  First, by definition, the
  root $r$ is always in a legitimate state, so the base case~($i=0$)
  trivially holds. 
  Then, after at most $3\nmax$ rounds, every process either belongs to
  a legal branch or has status~$I$ (by
  Corollary~\ref{cor:propagatingI}), thus any non-isolated process~$v
  \in V_r$ always stores a distance $d$ such that $d \geq d(v,r)$, its
  actual weighted distance to~$r$.
  By induction hypothesis, every process at hop-distance at most~$i$
  from~$r$ has converged to a legitimate state within at most
  $3\nmax+i$ rounds. Therefore, at the beginning of
  round~$3\nmax+i+1$, every process~$v$ at hop-distance~$i+1$ from~$r$
  which is not in a legitimate state is enabled for executing
  rule~$\mathrm{R_C}$.  Thus, at the end of round~$3\nmax+i+1$, every
  process at hop-distance at most~$i+1$ from~$r$ is in a legitimate
  state (such processes cannot be neutralized during this round).
  Also, these processes will never change their state since there are
  no processes that can make them closer to~$r$.
\end{proof}

Summarizing all the results of this section, we obtain the following
theorem.

\begin{theorem}\label{theo:main-fair}
  Every execution of Algorithm \A lasts at most~$3\nmax+D$ rounds,
  where~$\nmax$ is the maximum number of non-root processes in a
  connected component and $D$ is the hop-diameter of the connected
  component containing~$r$.
\end{theorem}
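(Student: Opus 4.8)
The plan is to assemble Theorem~\ref{theo:main-fair} directly from the corollaries and lemmas already established in this section, which have been arranged precisely so that the final bound falls out by combining them. I would first observe that the round complexity splits into two phases: the elimination of all illegal branches (which handles every connected component, including the error-correction activity inside $V_r$), followed by the propagation of correct distances toward~$r$ within $V_r$ itself.

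For the first phase, I would invoke Corollary~\ref{cor:propagatingI}, which states that after at most round~$3\nmax$ there are no illegal branches forever. Combined with Corollary~\ref{cor:convergeI}, this already settles every connected component~$H$ other than $V_r$: all of their processes are in a legitimate state (status~$I$) after at most $3\nmax$ rounds. So the only remaining work concerns $V_r$, where processes in legal branches may still store distances strictly larger than their true weighted distance $d(\cdot,r)$.

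For the second phase, I would apply Lemma~\ref{lem:lastDrounds} with $i$ taken up to the hop-diameter~$D$ of $V_r$. That lemma guarantees that, starting from round~$3\nmax+i$, every process at hop-distance at most~$i$ from~$r$ is in a legitimate state. Since every process of $V_r$ lies at hop-distance at most~$D$ from~$r$ by definition of the hop-diameter, setting $i=D$ shows that by the beginning of round~$3\nmax+D$ (equivalently, within $3\nmax+D$ rounds) every process of $V_r$ is legitimate. Because legitimate configurations are exactly the terminal ones (Corollary~\ref{coro:legterm}), the execution has terminated, giving the claimed bound of $3\nmax+D$ rounds.

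The reasoning here is essentially bookkeeping: no new obstacle arises, since all the substantive inductive arguments have been discharged in the preceding lemmas. The one point that needs a little care is making sure the two phases compose correctly, namely that the distance-propagation argument of Lemma~\ref{lem:lastDrounds} is only started after the illegal branches are gone (which is why that lemma is indexed from round~$3\nmax$), and that processes already legitimate are not subsequently neutralized or perturbed by late activity elsewhere---a fact guaranteed by the ``forever'' clauses in Corollaries~\ref{cor:propagatingI} and~\ref{cor:convergeI} and by the final sentence of Lemma~\ref{lem:lastDrounds}. Hence the proof is just a matter of stitching these statements together.
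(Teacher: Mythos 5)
Your proof is correct and is essentially the paper's own argument: the paper proves the theorem by simply ``summarizing all the results of this section,'' i.e., combining Corollary~\ref{cor:convergeI} for the components other than $V_r$ with Lemma~\ref{lem:lastDrounds} instantiated at $i=D$ for $V_r$, exactly as you do. Your explicit appeal to Corollary~\ref{coro:legterm} to conclude termination is a detail the paper leaves implicit, but it is the right justification.
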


%
%

\section{Conclusion}\label{ccl}

In this paper, we have proposed a silent self-stabilizing
  algorithm for the DCDSPM problem. This algorithm is written in the
  composite atomicity model, assuming a distributed unfair daemon (the
  weakest scheduling assumption of the model).
Its stabilization time in rounds is at most~$3\nmax+D$, where $\nmax$
is the maximum number of non-root processes in a connected component
and $D$ is the hop-diameter of $V_r$.
Furthermore, if we additionally assume that edge weights are positive
integers, then it stabilizes in a polynomial number of steps: namely,
we exhibit a bound in $O(\maxi \nmax^3 n)$, where $\maxi$ is the
maximum weight of an edge and $n$ is the number of processes.
To obtain this stabilization time polynomial in steps, the key idea
was to freeze the growth of abnormal trees before removing them in a
top-down manner. This freezing mechanism is implemented as a
propagation of information with feedback in the tree. This technique
is general. In particular, it can be used in other spanning tree or
forest constructions.

The stabilization time is, by definition, evaluated from an arbitrary
initial configuration, and so is drastically impacted by worst case
scenarios. Now, in many cases, transient faults are sparse and their
effect may be superficial. For example, a topological change in a
network commonly consists of a single link failure. Some
specializations of self-stabilization, such as
superstabilization~\cite{DH97j}, self-stabilization with service
guarantee~\cite{JM14}, or gradual stabilization~\cite{ADDP16c} have
been proposed to target recovery from such favorable cases as a
performance issue.
Proposing silent algorithms for the DCDSPM problem implementing one
of these aforementioned stronger properties, while achieving polynomial step
complexity, is an interesting perspective of our work.


\nocite{*}
\bibliographystyle{alpha}
\bibliography{biblio}
\label{sec:biblio}

\end{document}